\date{}
\definecolor{shadecolor}{rgb}{1,0,0}
\newtheorem{theorem}{Theorem}
\newtheorem{lemma}{Lemma}
\newenvironment{proof}{{\noindent\bf Proof. } }{{\hfill $\Box$}}
\def\PROG#1{${\cal #1}$}
\title{Self-Stabilizing Byzantine Asynchronous Unison}
\author{Swan Dubois$^{1,3}$
   \and
   Maria Gradinariu Potop-Butucaru$^{1,4}$
   \and Mikhail Nesterenko$^{2,5}$
   \and S\'ebastien Tixeuil$^{1,6}$
  }
\begin{document}

\maketitle

\footnotetext[1]{The author is with Universit\'e Pierre \& Marie Curie 
    and \mbox{INRIA}, France.}
\footnotetext[2]{The author is with Kent State University, USA.}
\footnotetext[3]{swan.dubois@lip6.fr}
\footnotetext[4]{maria.gradinariu@lip6.fr}
\footnotetext[5]{mikhail@cs.kent.edu}
\footnotetext[6]{sebastien.tixeuil@lip6.fr}

\abstract{We explore asynchronous unison in the presence of systemic
  transient and permanent Byzantine faults in shared memory. We
  observe that the problem is not solvable under less than strongly
  fair scheduler or for system topologies with maximum node degree
  greater than two. We present a self-stabilizing Byzantine-tolerant
  solution to asynchronous unison for chain and ring topologies. Our
  algorithm has minimum possible containment radius and optimal
  stabilization time.  }

\em

\section{Introduction} 

\emph{Asynchronous unison}~\cite{M91j} requires processors to maintain
synchronization between their counters called
\emph{clocks}. Specifically, each processor has to increment its clock
indefinitely while the clock \emph{drift} from its neighbors should
not exceed $1$.  Asynchronous unison is a fundamental building block
for a number of principal tasks in distributed systems such as
distributed snapshots~\cite{CL85j} and synchronization
\cite{A85j,AKMPV07j}.

A practical large-scale distributed system must counter a variety of
transient and permanent faults. A systemic transient fault may perturb
the configuration of the system and leave it in the arbitrary
configuration. \emph{Self-stabilization}~\cite{D74j,D00b,T09bc} is a
versatile technique for transient fault forward recovery.
\emph{Byzantine fault}~\cite{LSP82j} is the most generic permanent
fault model: a faulty processor may behave arbitrarily. However,
designing distributed systems that handle both transient and permanent
faults proved to be rather difficult~\cite{DD05c,DW04j,NA02c}. Some of
the difficulty is due to the inability of the system to counter
Byzantine behavior by relying on the information encoded in the global
system configuration: a transient fault may place the system in an
arbitrary configuration.

In this context considering joint Byzantine and systemic transient
fault tolerance for asynchronous unison appears futile.  Indeed, the
Byzantine processor may keep setting its clock to an arbitrary value
while the clocks of the correct processors are completely out of
synchrony. Hence, we are happy to report that the problem is
solvable. In this paper we present a shared-memory Byzantine-tolerant
self-stabilizing asynchronous unison algorithm that operates chain and
ring system topologies. The algorithm operates under a strongly fair
scheduler. We show that the problem is unsolvable for any other
topology or for less stringent scheduler. Our algorithm achieves
minimal fault-containment radius: each correct processor eventually
synchronizes with its correct neighbors.  We prove our algorithm
correct and demonstrate that its stabilization time is asymptotically
optimal.

\ \\ \textbf{Related work.} The impetus of this work is the study by
Dubois et al~\cite{DPT09ca}. They consider joint tolerance to crash
faults and systemic transient faults. The key observation that enables
this avenue of research is that the definition of asynchronous unison
does not preclude the correct processors from decrementing their
clocks. This allows the processors to synchronize and maintain unison
even while their neighbors may crash or behave arbitrarily.

There are several pure self-stabilizing solutions to the unison
problem~\cite{BPV04c,BPV05c,CFG92c,GH90j}. None of those tolerate
Byzantine faults.

Classic Byzantine fault tolerance focuses on masking the fault. There
are self-stabilizing Byzantine-tolerant clock synchronization
algorithms for completely connected synchronous systems both
probabilistic~\cite{BDH08c,DW04j} and
deterministic~\cite{DH07cb,HDD06c}. The probabilistic and
deterministic solutions tolerate up to one-third and one-fourth of
faulty processors respectively.

Another approach to joint transient and Byzantine tolerance is
\emph{containment}. For tasks whose correctness can be checked
locally, such as vertex coloring, link coloring or dining
philosophers, the fault may be isolated within a region of the
system. \emph{Strict-stabilization} guarantees that there exists a
containment radius outside of which the processors are not affected by
the fault~\cite{MT07j,NA02c,SOM05c}. Yet some problems are not local
and do not admit strict stabilization. However, the tolerance
requirements may weakened to
\emph{strong-stabilization}~\cite{MT06cb,MT08ca} which allows the
processors arbitrarily far from the faulty processor to be
affected. The faulty processor can affect the correct processors only
a finite number of times.  Strong-stabilization enables solution to
several problems, such as tree orientation and tree construction.

\section{Model, Definitions and Notation}

\paragraph{Program syntax and semantics.} A distributed system consists
of $n$ processors that form a communication graph. The processors are
nodes in this graph. The edges of this graph are pairs of processors
that can communicate with each other. Such pairs are \emph{neighbors}.
A \emph{distance} between two processors is the length of the shortest
path between them in this communication graph. Each processor contains
variables and rules. A variable ranges over a fixed domain of values.
A rule is of the form $\langle label \rangle : \langle guard \rangle
\longrightarrow \langle command \rangle$.  A \emph{guard} is a boolean
predicate over processor variables. A \emph{command} is a sequence of
assignment statements.  Processor $p$ may mention its variables
anywhere in its guards and commands.  That is, $p$ can read and update
its variables. However, $p$ may not mention the variables of its
neighbors on the left-hand-sides of the assignment statements of its
commands. That is, $p$ may only read the variables of its neighbors.

A processor is either \emph{correct} or \emph{faulty}.  In this paper
we consider \emph{crash faults} and \emph{Byzantine faults}. A crashed
processor stop the execution of its rules for the remainder of the
run.  A processor affected by Byzantine fault disregards its program
and it may write arbitrary values to variables. Note that, in a given
state, a Byzantine processor exhibits the same state to all its
neighbors.  When the fault type is not explicitly mentioned, the fault
is Byzantine.

An assignment of values to all variables of the system is
\emph{configuration}. A rule whose guard is \textbf{true} in some
system configuration is \emph{enabled} in this configuration, the rule
is \emph{disabled} otherwise. An atomic execution of a subset of
enabled rules transitions the system from one configuration to
another. This transition is a \emph{step}. Note that a faulty
processor is assumed to always have an enabled rule and its step
consists of writing arbitrary values to its variables.  A \emph{run}
of a distributed system is a maximal sequence of such transitions. By
maximality we mean that the sequence is either infinite or ends in a
state where none of the rules are enabled.

\paragraph{Schedulers.}
A \emph{scheduler} (also called \emph{daemon}) is a restriction on the
runs to be considered. The schedulers differ by execution semantics
and by fairness. The scheduler is \emph{synchronous} if in every run
each step contains the execution of every enabled rule. The scheduler
is asynchronous otherwise. There are several types of asynchronous
schedulers.  In the runs of \emph{distributed} (also called
\emph{powerset}) scheduler, a step may contain the execution of an
arbitrary subset of enabled rules. This is the lest restrictive
scheduler.  In the runs of a \emph{central} scheduler, every step
contains the execution of exactly one enabled rule. In the runs of
\emph{locally central} scheduler, the step may contain the execution
of multiple enabled rules as long as none of the rules belong to
neighbor processors.  Central and locally central schedulers are
equivalent. That is, they define the same set of runs. In this paper
we consider these two types of schedulers.

With respect to fairness, the schedulers are classified as
follows. The most restrictive is a \emph{strongly fair scheduler}. In
every run of this scheduler, a rule is executed infinitely often if it
is enabled in infinitely many configurations of the run. Note that the
strongly fair scheduler requires that the rule is executed even if it
continuously keeps being enabled and disabled throughout the run.  A
less restrictive is \emph{weakly fair scheduler}. In every run of this
scheduler, a rule is executed infinitely often if it is enabled in all
but finitely many configurations of the run. That is, the rule has to
be executed only if it is continuously enabled. An \emph{unfair
  scheduler} places no fairness restrictions on the runs of the
distributed system. Faulty processors are not subject to scheduling
restrictions of any of the schedulers: a faulty processor may take no
steps during a run or it may take an infinitely many steps.

\paragraph{Predicates and specifications.} 
A predicate is a boolean function over program configurations. A
configuration \emph{conforms} to some predicate $R$, if $R$ evaluates
to \textbf{true} in this configuration. The configuration
\emph{violates} the predicate otherwise.  Predicate $R$ is
\emph{closed} in a certain program \PROG{P}, if every configuration of
a run of \PROG{P} conforms to $R$ provided that the program starts
from a configuration conforming to $R$. Note that if a program
configuration conforms to $R$ and, after the execution of any step of
\PROG{P}, the resultant configuration also conforms to $R$, then $R$
is closed in \PROG{P}.

A \emph{processor specification} for a processor $p$ defines a set of
configuration sequences. These sequences are formed by variables of
some subset of processors in the system. This subset always includes
$p$ itself. A \emph{problem specification}, or just \emph{problem},
defines specifications for each processor of the system. A problem
specification in the presence of faults defines specifications for
correct processors only.  Program \PROG{P} \emph{solves} problem
\PROG{S} under a certain scheduler if every run of \PROG{P} satisfies
the specifications defined by \PROG{S}.  A closed predicate $I$ is an
\emph{invariant} of program \PROG{P} with respect to problem \PROG{S}
if every run of \PROG{P} that starts in a state conforming to $I$
satisfies \PROG{S}. An $f$-fault $d$-distance invariant $I_{fd}$ is a
particular invariant of \PROG{P} such that if the system has no more
than $f$ processors then in every run that starts in a configuration
conforming to $I_{fd}$, each processor in the distance of at least $d$
away from the fault satisfies the problem \PROG{S}. That is, only
correct processors at distance $d$ or higher have to satisfy the
specification.

A program \PROG{P} is \emph{self-stabilizing} to specification
\PROG{S} if every run of \PROG{P} that starts in an arbitrary
configuration contains a configuration conforming to an invariant of
\PROG{P}.  A program \PROG{P} is \emph{strictly-stabilizing} for $f$
faults and distance $d$, denoted $(f,d)$-\emph{strictly-stabilizing},
to problem \PROG{S} if \PROG{P} converges to an $f$-fault $d$-distance
invariant $I_{fd}$.

\paragraph{Unison specification.} 
Consider the system of processors each of which has a natural number
variable $c$ called \emph{clock}. The clock \emph{drift} between two
processors is the difference between their clock values. Two neighbor
processor are \emph{in unison} if their drift is no more than one.

\emph{Asynchronous unison} specifies that, for every processor $p$,
every program run has to comply with the following two properties.
\begin{description}
\item{\em Safety:} in every configuration, processor $p$ is in unison
  with its neighbors;
\item{\em Liveness:} the clock of processor $p$ is incremented
  infinitely often.
\end{description}

A program that solves the asynchronous unison problem is
\emph{minimal} if the only variable that each processor has it its
clock.

%
%

\section{Impossibility Results and Model Justification}

Dubois et al~\cite{DPT09ca} established a number of impossibility
results for asynchronous unison and crash faults. These results are
immediately applicable to Byzantine faults as a Byzantine process may
emulate the crash fault by never executing a step. We summarize their
results in the below theorem.

\begin{theorem}[\cite{DPT09ca}]\label{trm:crash}
There does not exist a minimal $(f,d)$-strictly-stabilizing solution to
the asynchronous unison problem in shared memory for any distance $d
\geq 0$ if the communication graph of the distributed system contains
processors of degree greater than two or if the number of faults is
greater than one or if the scheduler is either unfair or weakly fair.
\end{theorem}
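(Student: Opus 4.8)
The plan is to argue by contradiction and by reduction. Since a Byzantine processor can emulate a crash by never taking a step, it suffices to rule out a minimal $(f,d)$-strictly-stabilizing solution against \emph{crash} faults; the stated result then follows a fortiori for Byzantine faults. So I assume some minimal algorithm $\mathcal{A}$ is $(f,d)$-strictly-stabilizing for unison on a graph meeting one of the three hypotheses, and I derive a contradiction. Minimality is the lever throughout: the guard and command of a processor depend only on its own clock and the clocks it can read, so a processor has no hidden state with which to remember history or break ties. I would first record a basic \emph{rigidity} observation: if a correct processor $p$ sees all its neighbors holding fixed clock values whose spread exceeds $2$, then no value of $c_p$ is in unison with all of them, so $\mathcal{A}$ cannot restore Safety at $p$; whereas if the neighbor values are fixed with spread at most $2$ while $p$ must keep incrementing, then $p$ is confined to a window of width $2$ and can progress only by oscillating. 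The three hypotheses are then handled by three separate adversarial runs, each violating Safety or Liveness forever at a correct processor at distance at least $d$ from the fault.

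\emph{Weak or unfair scheduler.} Here the goal is to show that strong fairness is exactly what the positive algorithm relies on. I would crash one processor to anchor a clock value, then exhibit a target correct processor $t$ at distance at least $d$ whose incrementing rule becomes enabled infinitely often but is never \emph{continuously} enabled, because each time it is enabled the scheduler first fires a neighbor that momentarily disables it. Such a schedule is weakly fair yet starves $t$, violating Liveness at $t$. Since every weakly fair run is also admissible for the unfair scheduler, the unfair case follows immediately from the weakly fair one.

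\emph{More faults or higher degree.} These are the two genuinely topological obstructions and share the real difficulty. For $f \ge 2$ I would use two crashes together with the rigidity observation to pin clock values that cannot be globally reconciled by any assignment to the correct processors. For degree greater than two I would exploit that a degree-$\ge 3$ processor cannot sit ``between'' its neighbors the way a chain or ring node can: with its three neighbors driven to distinct phases by the schedule, it can never increment while remaining in unison with all three at once, so Safety and Liveness become mutually exclusive at that node.

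The hard part, and the step I expect to spend the most effort on, is that strict stabilization only protects processors at distance at least $d$, so a violation created next to the fault is not yet a contradiction. I must therefore build a graph in which the frustrated substructure (the two incompatible pins, or the degree-$\ge 3$ node) sits at distance at least $d$ from the permanent fault, use the \emph{arbitrary} initial configuration granted by the transient fault to place the whole surrounding region in a frustrated state, and then prove that the scheduler can \emph{maintain} the frustration indefinitely rather than letting $\mathcal{A}$ diffuse it away. Showing that this propagation survives for every $d$, and in the degree case even under a strongly fair scheduler, is where the argument is most delicate; the remaining bookkeeping (maximality of the run and its legality under the claimed scheduler) is routine once the frustrated region has been correctly isolated from the protected processor.
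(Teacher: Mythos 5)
The paper does not prove Theorem~\ref{trm:crash}: it is imported from the crash-fault study \cite{DPT09ca}, and the only argument the paper itself supplies is the one-line reduction that a Byzantine processor can emulate a crash by never executing a step. Your opening paragraph reproduces exactly that reduction, so the portion of your proposal corresponding to what the paper actually establishes is correct and identical in approach. Everything after that is an attempt to re-derive the cited crash-fault impossibilities, which the paper does not attempt; your three scenarios do track its informal intuition paragraph (a permanently blocked degree-three node, two irreconcilable pinned clock values, a starved processor under weak fairness).

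As a proof, though, the remainder has a genuine gap beyond the distance-$d$ isolation issue you already flag. Your adversarial runs treat the correct neighbors as if the adversary could choose their moves: neighbors are ``driven to distinct phases,'' a neighbor ``momentarily disables'' the target, a frustrated region is ``maintained.'' But in an impossibility argument the correct processors execute the arbitrary algorithm $\mathcal{A}$ being refuted; the adversary controls only the initial configuration, the schedule, and the faulty processors. To make any of the three runs go through you must first prove that $\mathcal{A}$'s own obligations --- closure of safety, liveness, minimality, determinism --- force the neighbors into the behavior your schedule needs: for instance, that a processor in unison with one neighbor and out of unison with the other must eventually move toward the latter, or that a processor in unison with a frozen neighbor must have a rule enabled and must cycle through a fixed set of values. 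That is exactly the style of indistinguishability argument this paper carries out in Lemmas~\ref{lem:middleClosureChain}--\ref{lem:endEnabledChain} for its own lower bound, and it is the actual content of the proof in \cite{DPT09ca}; your plan names minimality as ``the lever'' but never constructs these forced-behavior lemmas, so none of the three contradictions is yet derived.
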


The intuition behind the impossibility results is as follows. If the
system contains a processor $p$ with at least three neighbors, the
neighbors can cycle through their states such that all three are
always in unison with $p$ yet $p$ cannot update its clock without
breaking unison with at least one neighbor. If the system allows two
faults, then the faulty processors may contain such clock values so
far apart that if the correct processors stay in unison with the
faulty ones then they are not able to synchronize with each other. If
the execution scheduler is either unfair or weakly fair then, one
correct processors may cycle through its unison states such that its
neighbor is never given an opportunity to update its clock.

The results of Theorem~\ref{trm:crash} leave the following execution
model that is still open for solutions: system topology with maximum
degree at most two (i.e. a chain or a ring), at most one fault, and a
strongly fair scheduler. We pursue solutions for this particular model
in the remainder of the paper.

\begin{figure}
\begin{tabbing}
1234567890123\=12345678901234567\=12345\=12345\=12345\=12345\=\kill
\textbf{processor}\ $p$ \\
\textbf{constants}\>  $l, r$: \  left and right neighbors of $p$\\
\> $dg_p$: degree of $p$\\
\textbf{variable} \>$c_p:$ \ natural number, clock value of $p$ \\
\ \\
\textbf{rules} \\
end processor rules \\
\emph{leftEndUp}:\>$(dg_p=1)\wedge(c_p \leq c_r) \longrightarrow c_p := c_r+1$ \\
\emph{leftEndDown}:\>$(dg_p=1)\wedge(c_p > c_r) \longrightarrow c_p := c_r-1$ \\
\emph{rightEndUp} and \emph{rightEndDown} are similar \\
\ \\
middle processor operation rules \\
\emph{middleLeftUp}:\>$(dg_p=2)\wedge(c_p = c_l \vee c_p = c_l-1) \wedge (c_p \leq c_r) \longrightarrow c_p := c_p+1$ \\
\emph{middleLeftDown}:\>$(dg_p=2)\wedge(c_p = c_l \vee c_p = c_l+1) \wedge (c_p > c_r) \longrightarrow c_p := c_p-1$ \\
\emph{middleRightUp} and \emph{middleRightDown} are similar \\
\ \\
middle processor synchronization rules\\
\emph{syncUp}:\>$(dg_p=2)\wedge(c_p < c_l - 1 ) \wedge (c_p < c_r - 1) \longrightarrow c_p := min\{c_l,c_r\}$ \\
\emph{syncDown}:\>$(dg_p=2)\wedge(c_p > c_l + 1) \wedge (c_p > c_r + 1) \longrightarrow c_p := max\{c_l,c_r\}$ \\
\end{tabbing}
\caption{\PROG{SSU}: $(1,0)$-strict-stabilizing asynchronous unison algorithm
for chains and rings.}\label{fig:ssu}
\end{figure}

\section{\PROG{SSU}: A Strict-Stabilizing Unison for Chains and Rings}

In this section we present the $(1,0)$-strictly-stabilizing minimal
priority algorithm unison algorithm, prove its correctness and
evaluate its stabilization performance.

\subsection{Algorithm Description}
%
%

The algorithm can operate on either chain or ring system topologies.
For the description of the algorithm, let us introduce some
topological terminology.  A \emph{middle} processor has two
neighbors. An \emph{end} processor has only one. In a ring every
processor is a middle processor. A chain has two end processors. We
consider the system of processors to be laid out horizontally left to
right. We, therefore, speak of left and right neighbor for a processor
and left and right ends of a chain.

Recall that \emph{drift} between two processors $p$ and $q$ is the
difference between their clock values.  Two processors $p$ and $q$ are
\emph{in unison} if the drift between them is no more than $1$.  An
\emph{island} is a segment of correct processors such that for each
processor $p$, if its neighbor $q$ is also in this island, then $p$
and $q$ are in unison. A processor with no in-unison neighbors is
assumed to be a single-processor island. Note that a faulty processor
never belongs to an island. The \emph{width} of an island is the
number of processors in this island.

The main idea of the algorithm is as follows. Processors form islands
of processors with synchronized clocks. The algorithm is designed such
that the clocks of the processors with adjacent islands drift closer
to each other and the islands eventually merge.  If a faulty processor
restricts the drift of one such island, for example by never changing
its clock, the other islands still drift and synchronize with the
affected island.

\paragraph{Operation description.}
A detailed description of \PROG{SSU} is shown in Figure~\ref{fig:ssu}.
Specifically, \PROG{SSU} operates as follows. Each processor $p$
maintains a single variable $c_p$ where it stores its current clock
value. That is, our algorithm is minimal.

We grouped the processor rules into end processor rules and middle
processor rules. Middle processor rules are further grouped into:
operation --- executed when the processor is in unison with at least
one of its neighbors, and synchronization --- executed otherwise.

At least one rule is always enabled at an end processor. Depending on
the clock value of its neighbor, the left end processor either
increments or decrements its own clock using rules \emph{leftEndUp}
and \emph{leftEndDown}. The operation of the right end processor is
similar.

Let us describe the rules of a middle processor.  If processor $p$ is
in unison with its left neighbor, $p$ can adjust $c_p$ to match its
right neighbor using rules \emph{middleLeftUp} or
\emph{middleLeftDown}. The execution of neither rule breaks the unison
of $p$ and its left neighbor. Similar adjustment is done for the left
neighbor using \emph{middleRightUp} and \emph{middleRightDown}.  Note
that if $p$ is in unison with both of its neighbors and $c_l$ and
$c_r$ differ by 2, none of these rules of $p$ are enabled as any
changes of $c_p$ break the unison with a neighbor of $p$.

If $p$ is in unison with neither of its neighbors, and the clocks of
the two neighbors are either both greater or both less than the clock
of $p$, the processor synchronizes its clock with one of the neighbors
using rule \emph{syncDown} or \emph{syncUp}.

\paragraph{Example operation.} The operation of our algorithm is best 
understood with an example. Figure~\ref{fig:Exemple1} shows the
operation of \PROG{SSU} on a chain without a permanent
fault. Figure~\ref{fig:Exemple2} illustrates the operation of
\PROG{SSU} on a chain with a faulty
processor. Figures~\ref{fig:Exemple3} and~\ref{fig:Exemple4} show the
operation of \PROG{SSU} on rings respectively without and with a
faulty processor.

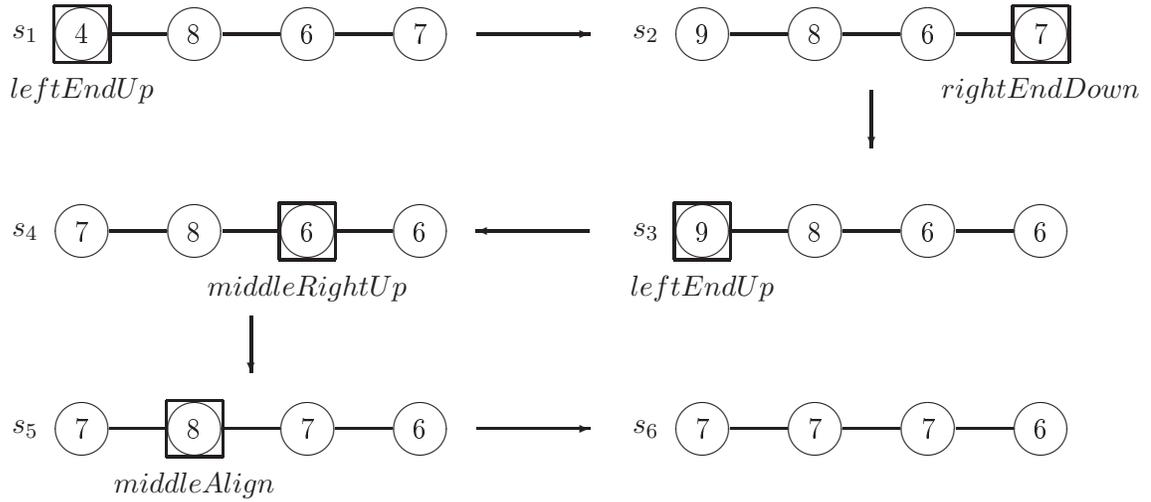
\begin{figure*}
\noindent \begin{centering} \ifx\JPicScale\undefined\def\JPicScale{0.75}\fi
\unitlength \JPicScale mm
\begin{picture}(190,90)(0,0)
\linethickness{0.3mm}
\put(15,85){\circle{10}}

\linethickness{0.3mm}
\put(35,85){\circle{10}}

\linethickness{0.3mm}
\put(55,85){\circle{10}}

\linethickness{0.3mm}
\put(75,85){\circle{10}}

\linethickness{0.3mm}
\put(185,85){\circle{10}}

\linethickness{0.3mm}
\put(145,85){\circle{10}}

\linethickness{0.3mm}
\put(125,85){\circle{10}}

\linethickness{0.3mm}
\put(20,85){\line(1,0){10}}
\linethickness{0.3mm}
\put(40,85){\line(1,0){10}}
\linethickness{0.3mm}
\put(60,85){\line(1,0){10}}
\linethickness{0.3mm}
\put(130,85){\line(1,0){10}}
\linethickness{0.3mm}
\put(150,85){\line(1,0){10}}
\linethickness{0.3mm}
\put(170,85){\line(1,0){10}}
\linethickness{0.3mm}
\put(185,50){\circle{10}}

\linethickness{0.3mm}
\put(165,50){\circle{10}}

\linethickness{0.3mm}
\put(145,50){\circle{10}}

\linethickness{0.3mm}
\put(125,50){\circle{10}}

\linethickness{0.3mm}
\put(130,50){\line(1,0){10}}
\linethickness{0.3mm}
\put(150,50){\line(1,0){10}}
\linethickness{0.3mm}
\put(170,50){\line(1,0){10}}
\linethickness{0.3mm}
\put(185,15){\circle{10}}

\linethickness{0.3mm}
\put(165,15){\circle{10}}

\linethickness{0.3mm}
\put(145,15){\circle{10}}

\linethickness{0.3mm}
\put(125,15){\circle{10}}

\linethickness{0.3mm}
\put(130,15){\line(1,0){10}}
\linethickness{0.3mm}
\put(150,15){\line(1,0){10}}
\linethickness{0.3mm}
\put(170,15){\line(1,0){10}}
\linethickness{0.3mm}
\put(75,50){\circle{10}}

\linethickness{0.3mm}
\put(55,50){\circle{10}}

\linethickness{0.3mm}
\put(35,50){\circle{10}}

\linethickness{0.3mm}
\put(15,50){\circle{10}}

\linethickness{0.3mm}
\put(20,50){\line(1,0){10}}
\linethickness{0.3mm}
\put(40,50){\line(1,0){10}}
\linethickness{0.3mm}
\put(60,50){\line(1,0){10}}
\linethickness{0.3mm}
\put(75,15){\circle{10}}

\linethickness{0.3mm}
\put(55,15){\circle{10}}

\linethickness{0.3mm}
\put(15,15){\circle{10}}

\linethickness{0.3mm}
\put(20,15){\line(1,0){10}}
\linethickness{0.3mm}
\put(40,15){\line(1,0){10}}
\linethickness{0.3mm}
\put(60,15){\line(1,0){10}}
\linethickness{0.3mm}
\put(85,85){\line(1,0){20}}
\put(105,85){\vector(1,0){0.12}}
\linethickness{0.3mm}
\put(85,50){\line(1,0){20}}
\put(85,50){\vector(-1,0){0.12}}
\linethickness{0.3mm}
\put(155,65){\line(0,1){10}}
\put(155,65){\vector(0,-1){0.12}}
\linethickness{0.3mm}
\put(45,25){\line(0,1){10}}
\put(45,25){\vector(0,-1){0.12}}
\linethickness{0.3mm}
\put(85,15){\line(1,0){20}}
\put(105,15){\vector(1,0){0.12}}
\put(15,85){\makebox(0,0)[cc]{4}}

\put(55,85){\makebox(0,0)[cc]{6}}

\put(75,85){\makebox(0,0)[cc]{7}}

\put(165,85){\makebox(0,0)[cc]{6}}

\put(185,85){\makebox(0,0)[cc]{7}}

\put(5,85){\makebox(0,0)[cc]{$s_1$}}

\put(115,85){\makebox(0,0)[cc]{$s_2$}}

\put(115,50){\makebox(0,0)[cc]{$s_3$}}

\put(5,50){\makebox(0,0)[cc]{$s_4$}}

\put(5,15){\makebox(0,0)[cc]{$s_5$}}

\put(115,15){\makebox(0,0)[cc]{$s_6$}}

\linethickness{0.3mm}
\put(35,15){\circle{10}}

\linethickness{0.3mm}
\put(10,90){\line(1,0){10}}
\put(10,80){\line(0,1){10}}
\put(20,80){\line(0,1){10}}
\put(10,80){\line(1,0){10}}
\linethickness{0.3mm}
\put(120,55){\line(1,0){10}}
\put(120,45){\line(0,1){10}}
\put(130,45){\line(0,1){10}}
\put(120,45){\line(1,0){10}}
\put(15,50){\makebox(0,0)[cc]{}}

\linethickness{0.3mm}
\put(165,85){\circle{10}}

\linethickness{0.3mm}
\put(180,90){\line(1,0){10}}
\put(180,80){\line(0,1){10}}
\put(190,80){\line(0,1){10}}
\put(180,80){\line(1,0){10}}
\put(165,50){\makebox(0,0)[cc]{6}}

\put(185,50){\makebox(0,0)[cc]{6}}

\put(75,50){\makebox(0,0)[cc]{6}}

\put(55,50){\makebox(0,0)[cc]{6}}

\put(35,85){\makebox(0,0)[cc]{8}}

\put(145,85){\makebox(0,0)[cc]{8}}

\put(125,85){\makebox(0,0)[cc]{9}}

\put(145,50){\makebox(0,0)[cc]{8}}

\put(125,50){\makebox(0,0)[cc]{9}}

\put(35,50){\makebox(0,0)[cc]{8}}

\put(15,50){\makebox(0,0)[cc]{7}}

\linethickness{0.3mm}
\put(50,55){\line(1,0){10}}
\put(50,45){\line(0,1){10}}
\put(60,45){\line(0,1){10}}
\put(50,45){\line(1,0){10}}
\put(55,15){\makebox(0,0)[cc]{7}}

\put(75,15){\makebox(0,0)[cc]{6}}

\put(35,15){\makebox(0,0)[cc]{8}}

\put(15,15){\makebox(0,0)[cc]{7}}

\linethickness{0.3mm}
\put(30,20){\line(1,0){10}}
\put(30,10){\line(0,1){10}}
\put(40,10){\line(0,1){10}}
\put(30,10){\line(1,0){10}}
\put(125,15){\makebox(0,0)[cc]{7}}

\put(145,15){\makebox(0,0)[cc]{7}}

\put(165,15){\makebox(0,0)[cc]{7}}

\put(185,15){\makebox(0,0)[cc]{6}}

\put(115,30){\makebox(0,0)[cc]{}}

\put(15,75){\makebox(0,0)[cc]{$leftEndUp$}}

\put(185,75){\makebox(0,0)[cc]{$rightEndDown$}}

\put(125,40){\makebox(0,0)[cc]{$leftEndUp$}}

\put(55,40){\makebox(0,0)[cc]{$middleRightUp$}}

\put(35,5){\makebox(0,0)[cc]{$middleAlign$}}

\end{picture}
  \par\end{centering}\caption{\label{fig:Exemple1} An example
    operation sequence of \PROG{SSU} on a chain with no
    faults. Numbers represent clock values. Squared processor has an
    enabled rule to be executed.}
\end{figure*}

\begin{figure*}
\noindent \begin{centering} \include{Exemple2}
  \par\end{centering}\caption{\label{fig:Exemple2} An example
    operation sequence of \PROG{SSU} on a chain with a faulty
    processor. Numbers are processor clock values. The faulty
    processor is in double circle. Squared processor has an enabled
    rule to be executed.}
\end{figure*}

\begin{figure*}
\noindent \begin{centering} \include{Exemple3}
  \par\end{centering}\caption{\label{fig:Exemple3} An example
    operation sequence of \PROG{SSU} on a ring with no faults. Numbers
    represent clock values. Squared processor has an enabled rule to
    be executed.}
\end{figure*}

\begin{figure*}
\noindent \begin{centering} \include{Exemple4}
  \par\end{centering}\caption{\label{fig:Exemple4} An example
    operation sequence of \PROG{SSU} on a chain with a faulty
    processor. Numbers are processor clock values. The faulty
    processor is in double circle. Squared processor has an enabled
    rule to be executed.}
\end{figure*}

\subsection{Correctness Proof}\label{sub:correctnessChain}

\paragraph{Chains.}
For chains it is sufficient to consider the operation of the algorithm
for the case where the faulty processor is at the end of the
chain. Indeed, if the faulty processor is in the middle of the chain,
the synchronization of the two segments of correct processors is
independent of each other. Thus, without loss of generality, we assume
that if there exists a faulty processor in the system, it is always
the right end processor.

\begin{lemma}\label{lem:islandClosureChain}
If a run of \PROG{SSU} on a chain starts from a configuration where
two processors $p$ and $q$ belong to the same island, then the two
processors belong to the same island in every configuration of this
run.
\end{lemma}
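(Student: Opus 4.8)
The plan is to reduce the lemma to a single-step closure property and then invoke the closure criterion recalled in the model section (a predicate preserved by every step holds throughout any run that starts in it). Fix $p$ and $q$ and let $R_{p,q}$ be the predicate ``$p$ and $q$ lie in a common island''. Since an island is a segment of correct processors in which adjacent members are in unison, on a chain $R_{p,q}$ holds exactly when the stretch between $p$ and $q$ consists of correct processors $p=p_0,p_1,\dots,p_k=q$ with $|c_{p_i}-c_{p_{i+1}}|\le 1$ for every $i$. Hence it suffices to prove that $R_{p,q}$ is closed under an arbitrary step of \PROG{SSU}; the statement of the lemma is then immediate, because the run begins in a configuration satisfying $R_{p,q}$.

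The heart of the argument is a single-edge claim: \emph{if two adjacent correct processors $a$ and $b$ are in unison, they remain in unison after any step}. First I would use that under the central (equivalently locally central) scheduler no two neighbors move in the same step, so on the edge $(a,b)$ at most one endpoint, say $a$, changes its clock while $c_b$ is frozen (if neither moves the claim is trivial). I would then check the claim rule by rule. For an end processor, \emph{leftEndUp}/\emph{leftEndDown} (and their right analogues) set $c_a$ to $c_b+1$ or $c_b-1$, making the drift exactly $1$, so unison is even re-established. For the middle operation rules the guard already encodes unison with one side: \emph{middleLeftUp} requires $c_a\in\{c_l,c_l-1\}$, so after $c_a:=c_a+1$ the drift with the left neighbor stays in $\{0,1\}$; and since $a,b$ are in unison with $c_a\le c_b$, the guard forces $c_b-c_a\in\{0,1\}$, so after the increment $c_b-c_a\in\{-1,0\}$ and unison with $b$ survives. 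The down rules and the right-hand rules are symmetric.

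The synchronization rules are the only steps that can move a clock by more than one, and here I would observe that \emph{syncUp} and \emph{syncDown} are guarded by $c_a<c_l-1 \wedge c_a<c_r-1$ (respectively the reversed inequalities), i.e.\ $a$ is in unison with \emph{neither} neighbor. Thus whenever $a$ has an in-unison neighbor these rules are disabled and can never threaten an in-unison edge, making the claim vacuously safe for them. I would also note that all of this is insensitive to whether $a$'s other neighbor is faulty or lies in a different island: the guard governing the side on which $a$ and $b$ agree alone guarantees that edge, so the standing assumption that the faulty processor is the right end of the chain cannot corrupt an internal island adjacency.

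To finish, I would assemble: in any step every edge $(p_i,p_{i+1})$ of the witnessing path has at most one active endpoint and, by the claim, stays in unison; all $p_i$ remain correct for the whole run; hence the same path witnesses $R_{p,q}$ in the successor configuration, which gives closure and therefore the lemma. I expect the main obstacle to be the rule-by-rule verification of the single-edge claim, and within it the one genuinely non-automatic point is preservation on the side the guard does not mention — precisely where the in-unison hypothesis together with the direction inequality in the guard pins the drift to $\{0,1\}$ just before a unit step, so that the step cannot push it past $1$.
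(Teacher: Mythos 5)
Your proof is correct and follows the same route the paper takes: the paper dispatches this lemma with the one-line remark that inspection of the rules shows a processor never de-synchronizes from an in-unison neighbor, and your argument is exactly that inspection carried out in full (single-edge closure under each rule, with the sync rules vacuous because their guards require unison with neither neighbor, plus the closure criterion from the model section). The rule-by-rule drift bookkeeping, including the side not mentioned in the guard, checks out.
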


In other words, Lemma~\ref{lem:islandClosureChain} states that an island
is never broken. The validity of the lemma can be easily ascertained
by the examination of the algorithm's rules as a processor never
de-synchronizes from its in-unison neighbor.

\begin{lemma}\label{lem:islandInfiniteChain}
In every run of \PROG{SSU} on a chain, each processor in the leftmost
island takes an infinite number of steps.
\end{lemma}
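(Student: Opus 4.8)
The plan is to prove the statement by an induction that propagates liveness rightward from the left end through the island, the engine being strong fairness. First I would record the facts that make the base case immediate. One of \emph{leftEndUp}, \emph{leftEndDown} has a guard that holds in every configuration (their conditions $c_p\le c_r$ and $c_p>c_r$ partition all cases, and $dg_p=1$ for an end), so the left endpoint always has an enabled rule; hence no configuration is terminal, the run is infinite, and by pigeonhole one of these two rules is enabled in infinitely many configurations. Strong fairness then forces that rule to be executed infinitely often, so the leftmost processor takes infinitely many steps. (Under the chain reduction the only possible faulty processor is the right end, so the left endpoint is correct.)

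Number the island $p_1,p_2,\dots$ from the left end rightward, and prove by induction on $j$ that each $p_j$ in the island takes infinitely many steps, the case $j=1$ being the base case. Assume $p_1,\dots,p_{j-1}$ each move infinitely often and, for contradiction, that $p_j$ moves only finitely often, so eventually $c_{p_j}=v$ is constant. By Lemma~\ref{lem:islandClosureChain} the neighbor $p_{j-1}$ stays in the same island as $p_j$ forever, so $c_{p_{j-1}}\in\{v-1,v,v+1\}$ permanently; in particular $p_{j-1}$ is always in unison with $p_j$, which disables its \emph{syncUp}/\emph{syncDown} rules, so every move of $p_{j-1}$ changes its clock by exactly $\pm1$. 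The pivotal local observation is that when $c_{p_{j-1}}=v=c_{p_j}$ the rules \emph{middleLeftUp} (whose remaining guard is $v\le c_{p_{j+1}}$) and \emph{middleLeftDown} (whose remaining guard is $v>c_{p_{j+1}}$) have complementary right-hand conditions, so exactly one of them is enabled at $p_j$ no matter what value the right neighbor holds---even if $p_{j+1}$ is the faulty processor.

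When $p_{j-1}$ is an interior processor its $\pm1$ moves keep it confined to the three-value band $\{v-1,v,v+1\}$, and since it moves infinitely often it must take the value $v$ infinitely often; at each such configuration $p_j$ is enabled by the observation above. By pigeonhole one fixed rule of $p_j$ is then enabled in infinitely many configurations, so strong fairness forces $p_j$ to move infinitely often, a contradiction. If instead $p_j$ is itself the (correct) right endpoint of the chain, it is unconditionally enabled by the argument of the base case and we are done without using $p_{j-1}$.

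The hard part is the remaining case $j=2$, where $p_{j-1}$ is the left end: because \emph{leftEndUp}/\emph{leftEndDown} reset the clock to $c_{p_2}\pm1$ rather than incrementing it, $p_1$ can jump directly between $v-1$ and $v+1$ without ever taking the value $v$, so the ``passes through $v$'' argument breaks. Here I would argue instead that $p_2$ can be kept disabled only while its neighbors straddle it, i.e. $(c_{p_1},c_{p_3})\in\{(v{+}1,v{-}1),(v{-}1,v{+}1)\}$, since every other combination in the band enables one of $p_2$'s four operation rules. I would then show this straddle cannot persist: $p_3$ is in unison with $p_2$, hence also moves only by $\pm1$ and cannot swap $v-1\leftrightarrow v+1$ without visiting $v$, while $p_1$ is compelled (by the base case) to flip infinitely often; consequently every flip of $p_1$ transiently destroys the straddle and enables $p_2$, and strong fairness forces $p_2$ to act. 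This step is exactly where strong---as opposed to weak---fairness is indispensable, in agreement with Theorem~\ref{trm:crash}; the only delicate point is when $p_3$ is the adjacent faulty endpoint, where one must invoke the serialization of steps under the (locally) central scheduler to guarantee that an enabling configuration is actually realized between consecutive flips.
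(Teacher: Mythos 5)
Your proof is correct and follows the same skeleton as the paper's: induction along the island from the always-enabled left end, with the key local observation that when a processor's clock equals its left neighbor's, one of \emph{middleLeftUp}/\emph{middleLeftDown} is enabled regardless of the right neighbor's value, so strong fairness (via pigeonhole over the two rules) forces infinitely many moves. Where you genuinely depart from the paper is at the second processor. The paper's inductive step rests on the single assertion that, because the left neighbor $l$ moves infinitely often, ``a configuration where $c_l=c_p$ repeats infinitely often''; this is sound when $l$ is a middle processor (its enabled rules move $c_l$ by $\pm 1$ inside the band $\{c_p-1,c_p,c_p+1\}$, so it must land on $c_p$ every other move), but it is literally false when $l$ is the left end, since \emph{leftEndUp}/\emph{leftEndDown} write $c_p\pm 1$ and never $c_p$. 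You identify this gap and close it with the straddle analysis: $p_2$ (frozen at $v$) can be disabled only when $c_{p_1}$ and $c_{p_3}$ sit on opposite sides of $v$, a correct $p_3$ cannot re-establish the straddle across a flip of $p_1$ without passing through an enabling configuration, and a Byzantine $p_3$ is handled by serializing steps. Two small points: your straddle claim assumes $c_{p_3}\in\{v-1,v,v+1\}$, i.e.\ that $p_3$ is in unison with $p_2$, which need not hold; the out-of-band case is easier (one of the two values of $c_{p_1}$ then always enables a \emph{middleLeft} rule of $p_2$, and a correct $p_3$ cannot cross the band without entering it) but should be stated. The simultaneity issue you flag for a Byzantine right neighbor under the locally central scheduler is real and is not addressed by the paper either; resolving it by serialization to the central scheduler is the right move given the paper's stated equivalence of the two. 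Net: same approach, but your version supplies a case that the paper's proof needs and does not give.
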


\begin{proof} The proof is by induction on the width of the island.
In every configuration, the left end processor has either
\emph{leftEndUp} or \emph{leftEndDown} enabled. Due to the strongly
fair scheduler, this processor takes an infinite number of steps in
every run.

Assume that the left neighbor $l$ of processor $p$ that belongs to the
leftmost island takes an infinite number of steps in the
run. According to Lemma~\ref{lem:islandClosureChain}, $l$ and $p$ are
in unison in every configuration of this run. That is, $l$ and $p$
transition between the three sets of states: $c_l = c_p+1$, $c_l =
c_p$ and $c_l= c_p-1$. See Figure~\ref{fig:islandInfiniteChain} for
illustration. Observe that, regardless of the clock value of the right
neighbor of $p$, if $c_l=c_p$ then $p$ has either \emph{middleLeftUp}
or \emph{middleLeftDown} rule enabled. If $p$ executes this rule, the
system goes either in the state where $c_l = c_p+1$ or $c_l = c_p-1$.
Since $l$ executes infinitely many steps in the run then a
configuration where $c_l=c_p$ repeats infinitely often. That is, one
of $p$'s rules are enabled infinitely often in this run.  Since the
scheduler is strongly fair, $p$ executes infinitely many steps.
\end{proof}

\begin{figure}
\begin{center}
\includegraphics[scale=0.25]{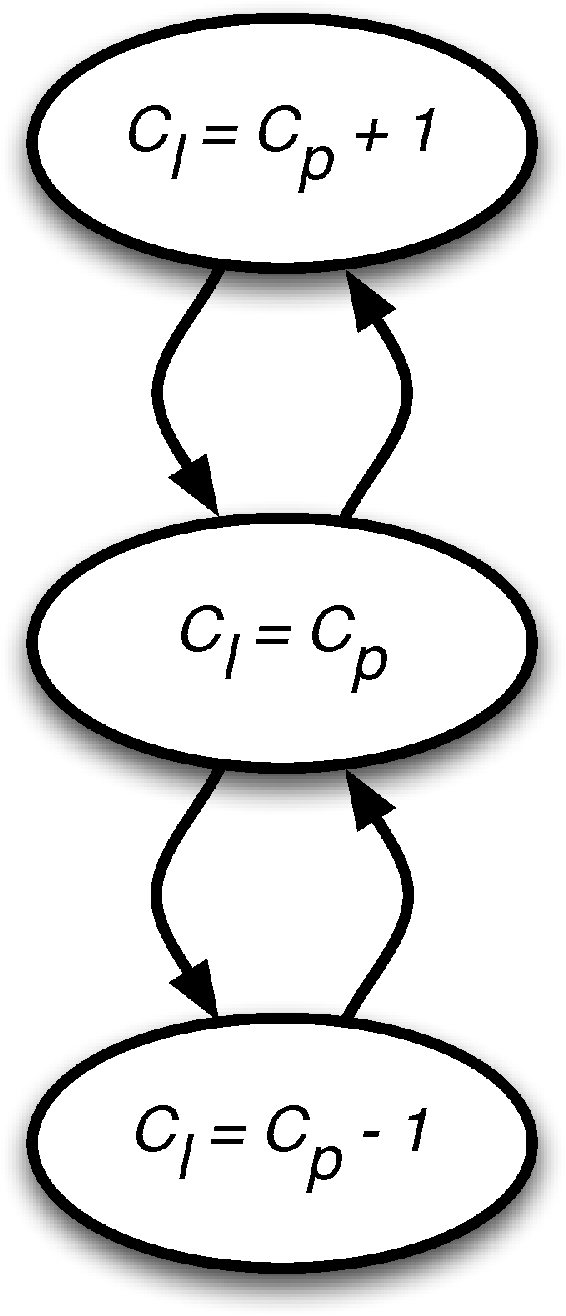}
\end{center}
\caption{The transitions of in-unison neighbor processors $l$ and
  $p$. An illustration for the proof of
  Lemma~\ref{lem:islandInfiniteChain}.}\label{fig:islandInfiniteChain}
\end{figure}

\begin{lemma}\label{lem:islandsMergeChain}
If a run of \PROG{SSU} on a chain starts from a configuration where
processor $p$ belongs to the leftmost island while its right correct
neighbor $r$ does not, then this run contains a configuration where
both $p$ and $r$ belong to the same island.
\end{lemma}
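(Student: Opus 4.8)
The plan is to track the signed drift $\delta = c_r - c_p$ and show that it is eventually forced into the unison band $\{-1,0,1\}$; at the first configuration where this happens $p$ and $r$ are in unison and hence lie in a common island, which is exactly what the lemma asserts. Since $r$ is not in the leftmost island we have $|\delta|\ge 2$ in the initial configuration, and by Lemma~\ref{lem:islandClosureChain} the left neighbour $l$ of $p$ (when $p$ is not itself the left end) stays in unison with $p$ throughout, so $c_p\in\{c_l-1,c_l,c_l+1\}$ in every configuration of the run. I would treat the case $\delta\ge 2$ in detail; the case $\delta\le -2$ is symmetric under the up/down reflection of the rule set (exchanging \emph{middleLeftUp}, \emph{middleRightDown}, \emph{syncDown} with \emph{middleLeftDown}, \emph{middleRightUp}, \emph{syncUp}).

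First I would dispose of the degenerate case where the leftmost island is the single processor $p$, i.e. $p$ is the left end with sole neighbour $r$. Then $p$ always has \emph{leftEndUp} or \emph{leftEndDown} enabled, so by Lemma~\ref{lem:islandInfiniteChain} it takes a step; the locally central scheduler forbids $r$ from moving in that same step, and after it $c_p\in\{c_r-1,c_r+1\}$, giving $|\delta|=1$ and unison at once.

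For the main case the crux is a monotonicity claim, obtained by inspecting the guards: while $\delta\ge 2$, \textbf{no} enabled rule of $p$ or of $r$ can increase $\delta$. For $p$, because $c_p<c_r$ and $c_p$ is within one of $c_l$, every guard except that of \emph{middleLeftUp} evaluates to false, so the only step $p$ can ever take in this regime is \emph{middleLeftUp}, which sets $c_p:=c_p+1$ and strictly decreases $\delta$. For $r$, when it is a middle processor with right neighbour $r'$, the inequality $c_r>c_p$ kills \emph{middleLeftUp}, \emph{middleRightUp} and \emph{syncUp}, so $r$ can execute only \emph{middleRightDown} or \emph{syncDown}, both of which lower $c_r$ and hence cannot increase $\delta$; moreover \emph{syncDown} resets $c_r:=\max\{c_p,c_{r'}\}$, which still leaves $\delta\ge 0$. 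If instead $r$ is the correct right end, a single \emph{rightEndDown} step sets $c_r:=c_p-1$, landing directly in unison. Consequently $\delta$ is non-increasing while $\delta\ge 2$, and any step that does push $\delta$ below $2$ lands it at a value $\ge 0$, hence in $\{0,1\}$.

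It remains to argue that $\delta$ cannot remain $\ge 2$ forever. Suppose it does. Then, by the monotonicity claim, every step $p$ takes is \emph{middleLeftUp} and lowers $\delta$ by one, while no step raises $\delta$. But $p$ lies in the leftmost island, so by Lemma~\ref{lem:islandInfiniteChain} it takes infinitely many steps; thus $\delta$ would decrease without bound, contradicting $\delta\ge 2$. Therefore some configuration has $\delta\le 1$, and by the landing observation $\delta\in\{0,1\}$ there, so $p$ and $r$ are in unison and belong to the same island. The main obstacle is the rule-by-rule monotonicity check of the previous paragraph, and in particular ruling out that the uncontrolled neighbour $r$ (whose moves depend on its own right neighbour and may use the large-jump rule \emph{syncDown}) ever widens the gap; once that invariant is in place, the result follows by combining it with the liveness already guaranteed by Lemma~\ref{lem:islandInfiniteChain}.
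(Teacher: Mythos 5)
Your proposal is correct and follows essentially the same route as the paper's proof: establish that while $p$ and $r$ are out of unison no rule of either processor can increase the drift, that every step of $p$ strictly decreases it, and then invoke Lemma~\ref{lem:islandInfiniteChain} to conclude the drift must eventually enter the unison band. You merely make explicit the rule-by-rule guard analysis and the end-processor edge cases that the paper leaves implicit.
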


In other words, Lemma~\ref{lem:islandsMergeChain} claims that every two
adjacent islands eventually merge.

\ \\
\begin{proof} 
We prove the lemma by demonstrating that the drift between $p$ and $r$
decreases to zero in every run of \PROG{SSU}.  Let us consider the
rules of $r$. The execution of any rule by $r$ can only decrease the
drift between the two processors. The execution of the rules by $p$
always decreases the drift as well.  According to
Lemma~\ref{lem:islandInfiniteChain}, $p$ takes infinitely many steps
in this run. This means that this run contains a configuration where
the drift between $p$ and $r$ is zero.
\end{proof}

Define the following predicate:

\[\begin{array}{ccl}
INV &\equiv & \text{each correct processor is in unison}\\ 
& & \text{with its correct neighbors}
\end{array}
\]

\begin{theorem}\label{trm:stabilizationChain}
Algorithm~\PROG{SSU} on chains stabilizes to $INV$.
\end{theorem}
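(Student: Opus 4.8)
The plan is to establish the two standard ingredients of stabilization separately: closure of $INV$ and convergence to it. Closure is immediate from Lemma~\ref{lem:islandClosureChain}. Since the faulty processor, if any, is the right end, the correct processors occupy a contiguous prefix of the chain, and $INV$ holds exactly when this entire prefix forms a single island. Because no island is ever broken, once all correct processors share one island they continue to do so, so any configuration conforming to $INV$ is followed only by configurations conforming to $INV$.

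For convergence, I would argue by a descent on the number $k$ of distinct islands formed by the correct processors in a given configuration. First I would observe that $k$ is non-increasing along any run: Lemma~\ref{lem:islandClosureChain} forbids an island from splitting, and the rules only ever decrease the drift at an inter-island boundary, so once two islands coalesce they stay coalesced and no new break can appear. The measure $k$ is bounded below by $1$, and $INV$ is precisely the predicate $k=1$.

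The key step is to show that whenever $k \ge 2$ the measure must eventually strictly drop. When $k\ge 2$ the leftmost island ends at some correct processor $p$ whose right neighbor $r$ is correct---it cannot be the faulty right end, since at least one further island lies strictly between $p$ and the chain's right end---and $r$ belongs to a different island. Lemma~\ref{lem:islandsMergeChain} then guarantees a later configuration in which $p$ and $r$ belong to the same island; at that configuration the two islands have merged, so the count is at most $k-1$. Applying the same reasoning to the suffix run starting at that configuration---which is again a run of \PROG{SSU} on the same chain under a strongly fair scheduler with the fault still at the right end, so that Lemmas~\ref{lem:islandInfiniteChain} and~\ref{lem:islandsMergeChain} remain available---and iterating, after at most $k-1$ merges the run reaches a configuration with a single correct-processor island, i.e.\ one conforming to $INV$.

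The main obstacle is not any single inequality but the bookkeeping that makes the descent rigorous. I must confirm that the quantity ``number of correct islands'' is genuinely well defined and monotone, so that a merge cannot be undone and no spurious island ever reappears; that the hypotheses of Lemma~\ref{lem:islandsMergeChain}---in particular that the leftmost island always has a \emph{correct} right neighbor lying outside it whenever $k\ge 2$---hold at each stage of the iteration; and that passing to a suffix of the run preserves both the strong-fairness and the single-fault-at-the-right-end assumptions, so that the lemma may legitimately be reapplied to the grown leftmost island. Once these points are settled, closure together with the finite descent yields stabilization of \PROG{SSU} to $INV$.
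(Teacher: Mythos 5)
Your proposal is correct and follows essentially the same route as the paper's (sketched) proof: closure from Lemma~\ref{lem:islandClosureChain} and convergence by induction on the number of islands via Lemma~\ref{lem:islandsMergeChain}. The additional bookkeeping you supply (monotonicity of the island count, the observation that when $k\ge 2$ the leftmost island's right neighbor must be correct, and the legitimacy of reapplying the lemmas to a suffix run) is exactly the detail the paper leaves implicit.
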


\begin{proof}(\emph{sketch})
If every correct processor is in unison with its neighbors, all
correct processors belong to a single island. The closure of \emph{INV}
follows from Lemma~\ref{lem:islandClosureChain}.  Note that
Lemma~\ref{lem:islandsMergeChain} guarantees that the two leftmost
islands eventually merge. The convergence if \PROG{SSU} to \emph{INV} can
be proven by induction on the number of islands in the initial
configuration.
\end{proof}

\begin{theorem}\label{trm:specChain}
Predicate $INV$ is an $(1,0)$-invariant of \PROG{SSU} on chains
with respect to the asynchronous unison problem.
\end{theorem}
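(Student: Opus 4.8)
The plan is to verify the two things that make $INV$ a $(1,0)$-invariant: that $INV$ is closed under \PROG{SSU}, and that from any configuration satisfying $INV$ (with at most one fault) every correct processor fulfils both clauses of asynchronous unison, where ``in unison with its neighbors'' is read as ``in unison with its correct neighbors'' --- the best attainable next to a Byzantine node, and exactly what $INV$ records. I would first apply the reduction stated at the opening of this subsection and assume without loss of generality that the unique fault, if present, is the right end processor; a mid-chain fault severs the correct processors into two independent segments treated by the same argument and its left--right mirror. Under this reduction a configuration satisfying $INV$ has all correct processors pairwise in unison along the chain, so they form a single island that contains the left end, i.e.\ the leftmost island.

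Closure and safety would follow at once from Lemma~\ref{lem:islandClosureChain}. Each adjacent pair of correct processors is in unison initially, hence in a common island, and the lemma keeps them in a common island --- therefore in unison --- in every configuration of the run; since this holds for every adjacent correct pair, $INV$ is preserved by each step. This simultaneously discharges the safety clause, as each correct processor stays in unison with its correct neighbors in every configuration. Note that the faulty processor's arbitrary writes cannot interfere: under $INV$ no correct processor is ever in a sync-eligible state, and the ``middle'' rules are guarded so that their firing never breaks unison with a correct neighbor, which is precisely the content of Lemma~\ref{lem:islandClosureChain}.

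For liveness I would invoke Lemma~\ref{lem:islandInfiniteChain}: since $INV$ puts all correct processors in the single leftmost island, each correct processor takes infinitely many steps. The crux --- and the step I expect to be the main obstacle --- is to promote ``infinitely many steps'' to ``the clock is incremented infinitely often.'' I would settle this by inspecting the guarded commands of Figure~\ref{fig:ssu}: every enabled rule strictly changes $c_p$, with \emph{leftEndUp}, \emph{middleLeftUp}, \emph{middleRightUp} and \emph{syncUp} strictly increasing it and their ``Down'' counterparts strictly decreasing it, so each step of $p$ is either an increment or a decrement. Were $p$ incremented only finitely often, then past its final increment all of its still-infinitely-many steps would be decrements of the natural number $c_p$; but a strictly decreasing sequence of naturals is finite, a contradiction. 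Hence every correct processor is incremented infinitely often. (Unlike classical unison the clock need not diverge, since a Byzantine neighbor can force bounded oscillation, yet it is still incremented infinitely often, which is all the specification asks.) Together with closure and safety this shows that every correct processor satisfies asynchronous unison from any $INV$ configuration with at most one fault, so $INV$ is a $(1,0)$-invariant.
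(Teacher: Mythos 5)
Your proof is correct and follows essentially the same route as the paper's: safety from the closure of $INV$ via Lemma~\ref{lem:islandClosureChain}, and liveness by combining Lemma~\ref{lem:islandInfiniteChain} with the observation that every rule strictly changes the clock and a natural-valued clock cannot decrease forever. Your explicit well-foundedness argument simply spells out what the paper compresses into ``since the clock values are natural numbers, each processor is bound to execute an infinite number of clock increments.''
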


In other words, Theorem~\ref{trm:specChain} states that every run of
\PROG{SSU} starting from a configuration conforming to \emph{INV}
satisfies the specification of asynchronous unison.

\ \\
\begin{proof} 
The safety property of the asynchronous unison follows immediately
from the closure of \emph{INV}. Let us consider the liveness property. Once
in unison the only operation that a processor can execute on its clock
is increment or decrement. According to
Lemma~\ref{lem:islandInfiniteChain}, every correct processor of the system
takes an infinite number of steps. Since the clock values are natural
numbers, each processor is bound to execute an infinite number clock
increments. Hence the liveness.
\end{proof}

\paragraph{Rings.} Since there are no end processors on a ring, we 
only have to consider the middle processor rules.

\begin{lemma}\label{lem:islandClosureRing}
If a run of \PROG{SSU} on a ring starts from a configuration where two
processors $p$ and $q$ belong to the same island, then the two
processors belong to the same island in every configuration of this
run.
\end{lemma}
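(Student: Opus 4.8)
The plan is to follow the template of the chain case (Lemma~\ref{lem:islandClosureChain}) almost verbatim, with the simplification that a ring has no end processors, so only the six middle-processor rules can ever fire. The whole statement reduces to a purely local invariant: if two \emph{adjacent} correct processors $p$ and $n$ are in unison in some configuration, then they are in unison in the next one. Granting this, the lemma follows by lifting from edges to islands, which I address at the end. The essential enabling fact is the scheduling assumption: under the central (equivalently, locally central) scheduler, two neighbor processors never move in the same step, so to show unison across a step I only ever have to analyze one of $p,n$ moving while the other's clock stays fixed.

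The core of the proof is a case analysis over the rule that the moving processor, say $p$, executes. For the operation rules the argument is a two-line guard computation. Take \emph{middleLeftUp}: its guard forces $c_p\in\{c_l,c_l-1\}$, and the increment $c_p:=c_p+1$ therefore leaves $c_p\in\{c_l+1,c_l\}$, so unison with $l$ is preserved; moreover the conjunct $c_p\le c_r$ together with the hypothesis $c_r\le c_p+1$ (unison with $r$) pins the new value to $c_r\le c_p+1\le c_r+1$, so unison with $r$ survives as well. The other three operation rules (\emph{middleLeftDown}, \emph{middleRightUp}, \emph{middleRightDown}) are handled identically by symmetry and by replacing $\le$ with $>$ in the appropriate conjunct. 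The synchronization rules are even easier: the guard of \emph{syncUp} requires $c_p<c_l-1$ \emph{and} $c_p<c_r-1$, and that of \emph{syncDown} requires $c_p>c_l+1$ and $c_p>c_r+1$; either condition is incompatible with $p$ being in unison with a neighbor, so these rules are simply disabled whenever there is unison to protect and cannot break an island.

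Finally I would lift the edge invariant to the island statement. If $p$ and $q$ lie in the same island in the initial configuration, there is a path of correct processors $p=p_0,p_1,\dots,p_k=q$ along the ring with each consecutive pair in unison; since every such adjacency is preserved at every step by the local invariant above, $p$ and $q$ remain connected by a chain of in-unison correct processors, hence stay in a common island in every configuration of the run. The only delicate point—and the one worth flagging explicitly—is the single-mover reduction: under a genuinely distributed (powerset) scheduler two in-unison neighbors could step apart simultaneously and break unison, so the argument leans squarely on the locally central scheduler guaranteeing that neighbors are never scheduled together. Everything else is the routine guard bookkeeping sketched above, and no feature of the ring topology beyond the absence of end processors enters, since the reasoning is entirely per-edge.
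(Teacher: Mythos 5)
Your proof is correct and follows essentially the same route as the paper: the paper justifies the chain version (and, by reference, the ring version) by observing that examination of the rules shows a processor never de-synchronizes from an in-unison neighbor, which is exactly the per-rule guard check you carry out. Your explicit treatment of the locally central scheduler (no two neighbors step together) and the lifting from per-edge unison to island membership merely spell out what the paper leaves implicit.
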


The above lemma is proven similarly to
Lemma~\ref{lem:islandClosureChain}.

\begin{lemma}\label{lem:islandInfiniteRing}
In every run of \PROG{SSU} on a ring, there is an island where every
processor takes an infinite number of steps.
\end{lemma}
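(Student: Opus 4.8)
The plan is to reduce the statement to the existence of a \emph{single} correct processor that moves infinitely often, and then to forbid a global ``freeze'' of the correct processors through a purely local enabledness analysis. First I would observe that it suffices to exhibit one correct processor $p$ that executes infinitely many steps. By Lemma~\ref{lem:islandClosureRing} the island containing $p$ is never broken, so $p$ and each of its in-island neighbors stay in unison throughout the run; the propagation argument already used in the proof of Lemma~\ref{lem:islandInfiniteChain} then applies verbatim (whenever the clock of an infinitely-moving processor equals that of an in-island neighbor $q$, processor $q$ has an operation rule enabled, so by strong fairness $q$ also moves infinitely often). Inducting along the island --- which on a ring is simply a path of correct processors --- spreads the ``moves infinitely often'' property from $p$ to the whole island. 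Hence the lemma follows once some correct processor is shown to move infinitely often.

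The technical core I would isolate is a disabledness criterion: a middle processor $p$ with neighbor clocks $c_l,c_r$ has no enabled rule exactly when its own clock lies strictly between them, i.e. $(c_l-c_p)(c_r-c_p)<0$. This is a short case check against the six middle rules: if $c_p$ equals a neighbor, or is a weak local maximum or minimum, then one of the four operation rules or one of the two synchronization rules fires; only the ``one neighbor strictly above, the other strictly below'' pattern leaves $p$ stuck (this is the case the text already notes when $c_l$ and $c_r$ differ by $2$).

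Now suppose, toward a contradiction, that no correct processor moves infinitely often. Then the run has a suffix in which every correct clock is constant; call this a frozen suffix. In the fault-free ring the correct clocks form a constant cyclic sequence, whose global maximum is not strictly between its neighbors and is therefore continuously enabled, contradicting fairness. If a single faulty processor $f$ is present, the correct processors form a path $p_1,\dots,p_k$ with $p_1$ and $p_k$ adjacent to $f$. Enabledness of each interior $p_i$ depends only on the constant correct clocks, so if any interior processor were enabled it would be continuously enabled and would move; thus all interior processors are disabled, which by the criterion forces the path to be strictly monotonic, say increasing. Then $p_1$ is disabled only while $c_f<c_{p_1}$ and $p_k$ only while $c_f>c_{p_k}$; since a Byzantine node exhibits a single value to both of its neighbors and $c_{p_1}\le c_{p_k}$, these two conditions can never hold simultaneously.

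To finish I would argue over time: if $c_f\ge c_{p_1}$ holds at infinitely many configurations then $p_1$ is enabled infinitely often and strong fairness forces it to move; otherwise $c_f<c_{p_1}\le c_{p_k}$ eventually always, making $p_k$ continuously enabled and hence moving. Either way the frozen suffix is contradicted, so some correct processor moves infinitely often, completing the argument. The main obstacle is precisely this last step: because the adversary can repeatedly push $c_f$ across the thresholds $c_{p_1}$ and $c_{p_k}$, the two end processors are only \emph{intermittently} enabled, so weak fairness would not suffice and one must combine strong fairness with the single-value constraint on $f$. This is exactly where the ring argument departs from the chain case (which had a genuine degree-one end continuously driving progress) and why the impossibility results exclude weaker schedulers.
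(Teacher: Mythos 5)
Your proof is correct, and its second half (propagating ``moves infinitely often'' through the island by induction on its width, using Lemma~\ref{lem:islandClosureRing} and the argument of Lemma~\ref{lem:islandInfiniteChain}) is exactly what the paper does. Where you genuinely diverge is in establishing the seed fact that \emph{some} correct processor takes infinitely many steps. The paper gets this in one line: in every configuration at least one correct processor holds the largest or the smallest clock value in the whole system (the single faulty processor cannot uniquely realize both extremes), such an extremal processor is never strictly between its neighbors and hence always has an enabled rule, and strong fairness plus pigeonhole over finitely many correct processors does the rest. You instead argue by contradiction through a frozen suffix: you isolate the exact local disabledness criterion (a middle processor is stuck iff its clock lies strictly between its neighbors' clocks, which checks out against all six middle rules), deduce that a frozen path of correct processors must be strictly monotonic, and then show that the two processors bordering the Byzantine node cannot both be strictly between their neighbors because the Byzantine node presents a single value to both sides. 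Your route is longer but buys something the paper's sketch leaves implicit: it pinpoints that the boundary processors are only \emph{intermittently} enabled as the adversary toggles $c_f$ across the thresholds, which is precisely why strong (rather than weak) fairness is indispensable here; the paper's extremal-value observation is slicker but hides that the ``always enabled'' processor may be a different one in each configuration, so it too ultimately relies on the same intermittent-enabledness feature of strong fairness. Both arguments are sound; the paper's is the more economical, yours the more self-contained and explanatory.
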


\begin{proof}(\emph{sketch}) 
Observe that in every configuration 
of \PROG{SSU} on a ring,
there is at least one correct processor whose clock holds the largest
or the smallest value in the system. This processor has a rule
enabled. Since we consider a strongly fair scheduler, there are
infinitely many steps executed by correct processors in every run of
\PROG{SSU}. Since there are finitely many correct processors, at least
one correct processor takes infinitely many steps. Let us consider the
island to which this processor belongs. The rest of the lemma is
proven by induction on the width of this island similar to
Lemma~\ref{lem:islandInfiniteChain}.
\end{proof}

\begin{lemma}\label{lem:islandsMergeRing}
If a run of \PROG{SSU} starts from a configuration where there is more
than one island, then this run contains a configuration where some two
islands merge.
\end{lemma}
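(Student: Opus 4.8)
The plan is to transplant the chain argument of Lemma~\ref{lem:islandsMergeChain} to the ring, using Lemma~\ref{lem:islandInfiniteRing} in place of Lemma~\ref{lem:islandInfiniteChain}. Let $I$ be the island furnished by Lemma~\ref{lem:islandInfiniteRing}, every processor of which takes infinitely many steps. Since the configuration contains more than one island and the system has at most one faulty processor, $I$ cannot exhaust all correct processors, nor can the faulty processor be the unique neighbor of $I$ on both sides (that would make $I$ together with the fault the whole ring, i.e.\ a single island). Hence $I$ has a boundary processor $p$ whose neighbor $r$ outside $I$ is correct and belongs to a distinct island $J$. Because $p$ and $r$ lie in different islands, their drift satisfies $|c_p - c_r| \geq 2$. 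The goal is then to show that this drift is driven down to at most $1$, at which point $p$ and $r$ fall into unison and $I$ and $J$ merge, which is exactly the claim.

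First I would treat the generic case where $I$ has width at least two, so that $p$ has an interior neighbor $l \in I$. By Lemma~\ref{lem:islandClosureRing}, $p$ stays in unison with $l$ throughout the run. Inspecting the rules, whenever $|c_p - c_r| \geq 2$ the synchronization rules of $p$ are disabled (they require $p$ to be out of unison with $l$) and the two middle rules that keep $p$ in unison with $r$ are disabled (they require $c_p \in \{c_r, c_r\pm 1\}$); only the two middle rules that keep $p$ in unison with $l$ can fire, and each of them moves $c_p$ one step toward $c_r$, strictly decreasing the drift. A symmetric inspection of $r$'s rules shows that none of $r$'s moves can increase the drift: its anchored middle moves also head toward $p$, and its \emph{syncUp}/\emph{syncDown} moves reset $c_r$ to $\min$/$\max$ of its neighbors, which one checks never lands farther from $c_p$ than before. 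Since $p$ takes infinitely many steps and the drift is a nonnegative integer that each $p$-step decreases while it exceeds $1$, after finitely many steps it reaches $1$ and the two islands merge.

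The step I expect to be the main obstacle is the boundary case in which $I$ is a single processor, for then $p$ has no in-unison neighbor to anchor the argument above. Here $p$ can move only through \emph{syncUp} or \emph{syncDown}, each of which resets $c_p$ to the minimum (respectively maximum) of its two neighbors' clocks, putting $p$ in unison with one of them in a single step. On a fault-free ring both neighbors are correct, so any such step is immediately a merge; the delicate point is a singleton $I=\{p\}$ adjacent to the faulty processor, whose clock is adversarial and could repeatedly absorb the synchronization ($p$ always jumping to the faulty value rather than toward $r$) without ever merging $p$ with the correct island $J$.

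To discharge that case I would argue by contradiction, assuming no two islands ever merge: by Lemma~\ref{lem:islandClosureRing} the island decomposition is then frozen for the whole run, while $p$ still steps infinitely often. Each such step sets $c_p := c_f$ with $|c_f - c_r|\geq 2$, and one checks that to remain enabled without merging, successive syncs force $c_p$ to move strictly and monotonically (always decreasing for repeated \emph{syncDown}, or always increasing for \emph{syncUp}) while staying separated from $c_r$ by at least $2$. Since $c_p$ is squeezed against the fixed barrier $c_r \pm 2$, it cannot sustain infinitely many such strictly monotone steps unless $c_r$ itself keeps receding, which would require $J$ to step infinitely often as well; pushing this extremal/monotonicity argument to its conclusion (or, equivalently, observing that a ring bearing the single permitted fault behaves for the correct processors like the already-analysed chain, with the fault acting as a non-crossable boundary, so that two correct islands must merge as in Lemma~\ref{lem:islandsMergeChain}) is where the real work of the proof lies.
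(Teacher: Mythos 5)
Your overall strategy is exactly the paper's: the paper's own proof is a short sketch that invokes Lemma~\ref{lem:islandInfiniteRing} to obtain an island $I$ whose processors all step infinitely often, and then declares that an argument similar to Lemma~\ref{lem:islandsMergeChain} finishes the job. Your handling of the width-$\geq 2$ case, and your observation that $I$ must have a correct neighbour lying in a distinct island (since otherwise the fault would bound $I$ on both sides and there would be only one island), are correct and supply the details the paper leaves implicit.

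The genuine problem is that you do not finish the proof: you flag the singleton island adjacent to the Byzantine processor as ``where the real work lies'' and leave it open. That case is in fact immediate from the guards of the synchronization rules, and the scenario you fear --- $p$ repeatedly absorbing the faulty value without ever approaching $r$ --- cannot occur. \emph{syncUp} is enabled only when $c_p < c_l - 1$ \emph{and} $c_p < c_r - 1$, i.e. both neighbours sit at least $2$ above $c_p$; the assignment $c_p := \min\{c_l,c_r\}$ therefore raises $c_p$ by at least $2$ while leaving it at most $c_r$, so $|c_p - c_r|$ strictly decreases even when the minimum is the Byzantine value. Symmetrically for \emph{syncDown}, and the anchored operation rules decrease the drift by $1$ as you already checked. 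Hence \emph{every} step of $p$ strictly decreases the non-negative integer $|c_p - c_r|$ while the islands remain separate; since $p$ steps infinitely often by the choice of $I$, and $r$'s steps never increase the drift, the drift reaches at most $1$ and the islands merge. No monotonicity-against-a-receding-barrier argument and no contradiction is needed. Your closing parenthetical alternative --- cutting the ring at the fault and reusing the chain lemmas wholesale --- also does not work as stated, because Lemma~\ref{lem:islandInfiniteChain} hinges on a true end processor always having an enabled rule, whereas the ``ends'' of the cut ring are fault-adjacent middle processors; this is precisely why the paper proves Lemma~\ref{lem:islandInfiniteRing} separately via the extremal-clock argument.
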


\begin{proof}(\emph{sketch}) 
Let us consider the initial configuration of \PROG{SSU} on a ring with
more than one island.  According to
Lemma~\ref{lem:islandInfiniteRing}, there is at least one island in
this configuration where every processor takes an infinite number of
steps. Assume, without loss of generality, that this island has an
adjacent island to the right. An argument similar to the one employed
in the proof of Lemma~\ref{lem:islandsMergeChain} demonstrates that
these islands eventually merge.
\end{proof}

The below two theorems are proven similarly to their equivalents for
the chain topology.

\begin{theorem}\label{trm:stabilizationRing}
Algorithm~\PROG{SSU} on rings stabilizes to $INV$.
\end{theorem}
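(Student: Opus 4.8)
Theorem~\ref{trm:stabilizationRing} asserts that \PROG{SSU} on rings stabilizes to $INV$, and the paper itself signals that the proof mirrors the chain case (Theorem~\ref{trm:stabilizationChain}). So let me plan out how I'd adapt that argument.

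The plan is to establish stabilization to $INV$ in the two standard parts, closure and convergence, reusing the ring lemmas exactly as Theorem~\ref{trm:stabilizationChain} reused their chain counterparts. First I would record the correspondence between the predicate and the island structure: a configuration conforms to $INV$ precisely when all correct processors form a single island. On a fault-free ring this island is the whole cycle; when one processor is Byzantine it never belongs to an island, so the correct processors occupy a contiguous arc, and $INV$ holds exactly when consecutive correct processors are pairwise in unison, i.e.\ when that arc is one island. In either case ``$INV$'' and ``all correct processors form a single island'' are interchangeable, and the two correct neighbors of a faulty processor are required only to be in unison with their correct side.

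For closure, suppose the run starts in a configuration conforming to $INV$, so that all correct processors lie in one island. By Lemma~\ref{lem:islandClosureRing} an island is never broken, hence the correct processors remain a single island in every subsequent configuration and $INV$ is preserved. This is the easy half.

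For convergence I would argue by induction on the number $k$ of islands present in the initial configuration. The base case $k=1$ is immediate from the correspondence above. For the inductive step with $k+1$ islands, Lemma~\ref{lem:islandsMergeRing} guarantees that the run reaches a configuration in which two islands merge; because islands are never broken (Lemma~\ref{lem:islandClosureRing}), that configuration contains at most $k$ islands. Treating it as a fresh starting point and applying the induction hypothesis, the run subsequently reaches a configuration with a single island, which conforms to $INV$. Since the island count is a nonnegative integer that never increases and strictly decreases until it reaches one, the process terminates.

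The main point to be careful about, and the reason the theorem-level proof is genuinely shorter than it appears, is that all the ring-specific difficulty has already been absorbed into the lemmas. Unlike the chain, a ring has no end processor to serve as a perpetual engine of progress, so the guarantee of infinitely many steps and of eventual merging rests on the largest/smallest-clock argument of Lemma~\ref{lem:islandInfiniteRing} together with Lemma~\ref{lem:islandsMergeRing}. At this level I only need to invoke these two as black boxes; the sole subtlety to verify is that merging really constitutes monotone progress, which follows by combining Lemma~\ref{lem:islandsMergeRing} (the count eventually decreases) with Lemma~\ref{lem:islandClosureRing} (the count never increases).
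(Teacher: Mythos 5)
Your proposal matches the paper's argument: the paper proves the ring version by noting it is ``proven similarly'' to Theorem~\ref{trm:stabilizationChain}, whose sketch is exactly your two-part structure --- closure from the island-closure lemma and convergence by induction on the number of islands using the island-merge lemma, with the ring lemmas (Lemmas~\ref{lem:islandClosureRing} and~\ref{lem:islandsMergeRing}) substituted for their chain counterparts. Your additional remarks on the $INV$/single-island correspondence and on why the ring case needed Lemma~\ref{lem:islandInfiniteRing} are consistent with the paper and fill in details it leaves implicit.
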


\begin{theorem}\label{trm:specRing}
Predicate $INV$ is an $(1,0)$-invariant of \PROG{SSU} on rings
with respect to the asynchronous unison problem.
\end{theorem}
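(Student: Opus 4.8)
The plan is to mirror the chain argument of Theorem~\ref{trm:specChain}, splitting the obligation into \emph{safety} and \emph{liveness}. Recall that the $(1,0)$-invariant requirement asks that, starting from a configuration conforming to $INV$ and with at most one faulty processor, every correct processor (all of them, since $d=0$) satisfies asynchronous unison. First I would dispatch \emph{safety}: by Lemma~\ref{lem:islandClosureRing} an island is never broken, so $INV$---the assertion that each correct processor is in unison with its correct neighbors---is closed under \PROG{SSU}. Consequently, in every configuration of a run that begins in $INV$, each correct processor is in unison with its correct neighbors, which is exactly the safety property.

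For \emph{liveness} I would first make the structural observation that, under $INV$, all correct processors form a \emph{single} island: with no fault the entire ring is in unison and hence one island, while with one fault the faulty processor severs the ring into a single arc of correct processors, all mutually in unison along that arc, again one island. This is the point where the ring argument genuinely differs from the chain and is the main obstacle. On the chain, Lemma~\ref{lem:islandInfiniteChain} already pins down the \emph{leftmost} island, which under $INV$ contains all correct processors; on the ring, Lemma~\ref{lem:islandInfiniteRing} only guarantees that \emph{some} island has every processor taking infinitely many steps, so I must argue that this island coincides with the set of all correct processors. The single-island observation supplies exactly this, so Lemma~\ref{lem:islandInfiniteRing} yields that every correct processor takes infinitely many steps in the run.

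It then remains to convert infinitely many steps into infinitely many increments. Once a processor is in unison with a neighbor, the synchronization rules \emph{syncUp} and \emph{syncDown} are disabled (their guards require the processor to be out of unison with both neighbors), so each step of a correct processor changes its clock by $\pm 1$ via the middle operation rules. Since clock values are natural numbers and hence bounded below by zero, a processor cannot decrement at every step forever; therefore a processor taking infinitely many $\pm 1$ steps must increment its clock infinitely often. This establishes liveness for every correct processor and completes the proof. The remaining bookkeeping---verifying that no enabled rule of a correct processor can desynchronize it from a correct neighbor and that the $\pm 1$ dichotomy covers every operation rule---is routine and identical to the chain case.
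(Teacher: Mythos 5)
Your proof is correct and follows essentially the same route as the paper, which proves Theorem~\ref{trm:specRing} only by appeal to the chain case: safety from the closure of $INV$ (Lemma~\ref{lem:islandClosureRing}), and liveness from every correct processor taking infinitely many steps combined with clocks being natural numbers, so that infinitely many $\pm 1$ moves force infinitely many increments. Your explicit observation that under $INV$ all correct processors form a single island --- needed because Lemma~\ref{lem:islandInfiniteRing} only guarantees \emph{some} island is live, unlike its chain counterpart --- is precisely the adaptation the paper leaves implicit, and it is the right one.
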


\subsection{Stabilization Time}

In this section, we compute the stabilization time of \PROG{SSU}. We
estimate the stabilization time in the number of asynchronous
rounds. In general, this notion is somewhat tricky to define for
strongly fair scheduler, at the actions of processors may become
disabled and then enabled an arbitrary many times before
execution. However, this definition simplifies for the case of
\PROG{SSU} as every correct processor takes an infinite number of
steps. We define an \emph{asynchronous round} to be the smallest
segment of a run of the algorithm where every correct process executes
a step.

\paragraph{Upper bound of \PROG{SSU}.} First, we show that \PROG{SSU} needs at most $L$ rounds to stabilize where $L$ is the largest clock drift between correct processors in the system.

\begin{theorem}\label{trm:upperBound}
The stabilization time of \PROG{SSU} is in $O(L)$ rounds both on chains and rings  where $L$ is the maximum clock drift between two correct neighbors in the initial configuration.
\end{theorem}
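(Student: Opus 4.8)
The plan is to show that each asynchronous round reduces the maximum clock drift between correct neighbors by at least one, so that after at most $L$ rounds the system reaches a configuration conforming to $INV$, where all correct processors are in unison. The key observation, already extracted in the correctness proofs, is that the drift between two adjacent islands is monotonically non-increasing (every rule execution by a processor in the leftmost island strictly decreases the drift with its non-unison neighbor, and the neighbor's own rules can only decrease it as well --- this is exactly the mechanism used in Lemmas~\ref{lem:islandsMergeChain} and~\ref{lem:islandsMergeRing}). The task is therefore to quantify this decrease on a per-round basis.

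First I would fix a potential function equal to the maximum clock drift between any two correct neighbors in the current configuration, and argue that this is the right measure of ``distance to $INV$'': the configuration conforms to $INV$ precisely when this quantity is at most $1$. Next I would analyze a single asynchronous round. By definition every correct processor takes at least one step during a round. Consider a boundary between two adjacent islands; the processors on either side of the boundary each take a step in the round, and by the drift-monotonicity argument each such step brings the two clock values closer by at least one unit (as long as they are not yet in unison). Combining this across the round, the drift across each inter-island boundary shrinks by at least one, so the global maximum drift $L$ decreases by at least one per round. After at most $L-1$ rounds all boundaries have drift $\leq 1$, i.e. all correct processors merge into a single island, which is $INV$. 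This yields the $O(L)$ bound uniformly for chains and rings, invoking Lemma~\ref{lem:islandInfiniteChain} (resp.\ Lemma~\ref{lem:islandInfiniteRing}) to guarantee that the relevant island processors indeed act each round.

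The main obstacle I anticipate is making the ``drift decreases by one per round'' claim fully rigorous in the presence of a strongly fair (rather than synchronous) scheduler and of the intermediate \emph{syncUp}/\emph{syncDown} jumps. A processor executing a synchronization rule may change its clock by more than one, and a processor may become enabled, disabled, and re-enabled within a round, so a naive step-counting argument is delicate. I would handle this by tracking the drift only across a single chosen boundary and showing that, over the course of one full round, the net effect of all steps taken by the two incident island processors is a strict decrease of at least one in that boundary's drift --- using the fact that no step ever increases the relevant drift (closure, Lemmas~\ref{lem:islandClosureChain} and~\ref{lem:islandClosureRing}) and that at least one drift-reducing step necessarily occurs because the boundary processors are not in unison. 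A secondary subtlety is the ring case, where one must ensure that the choice of ``leftmost'' island and of the direction of merging does not create a circular dependency; here I would appeal directly to Lemma~\ref{lem:islandInfiniteRing}, which identifies an island whose processors all act infinitely often, and measure drift relative to that anchor island, so the counting argument proceeds exactly as on a chain.
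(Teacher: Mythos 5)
Your proposal is correct and takes essentially the same approach as the paper: the paper likewise argues that every activation of a processor incident to an inter-island boundary strictly decreases that boundary's drift while no step increases it, and that each round contains at least one such activation, so the drift falls by at least one per round. The only cosmetic difference is that the paper reasons by contradiction on a single fixed boundary pair $(p,q)$ (handling end processors separately) rather than through a global maximum-drift potential function.
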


\begin{proof}
Assume that there exists an execution $\omega$ such that there exists at least two distinct islands $I_1$ and $I_2$ at the end of the round $L_\omega$ (where $L_\omega$ is the maximum clock drift between two correct neighbors in the initial configuration of $\omega$). Note that $L_\omega \geq 2$. Otherwise, any processor is in unison with its neighbor in the initial configuration and Lemma \ref{lem:islandClosureChain} or \ref{lem:islandClosureRing} implies $I_1$ and $I_2$ are never distinct. 

Let $p$ and $q$ be two neighbor processors such that $p\in I_1$ and $q\in I_2$. Without loss of generality, we can assume that $c_q<c_p$ in the initial configuration of $\omega$. By construction, we have $c_p-c_q\leq L_\omega$.

While $I_1$ and $I_2$ are distinct, according to the proof of Lemma \ref{lem:islandsMergeChain} or \ref{lem:islandsMergeRing}, the following property holds: $c_q<c_p$.

In the case where the system is a chain, note that $p$ and $q$ are not end processors. Otherwise, $p$ and $q$ are in unison at the end of the first round since the end processor synchronizes its clock with the one of its neighbor at its first activation and this contradicts the construction of $\omega$ and the fact that $L_\omega \geq 2$.

Now, we can observe that any activation of $p$ by a middle processor operation or synchronization rule can only decrease the clock value of $p$ by at least one.
 
Following the definition of asynchronous round, there is at least one activation of $p$ during each round of $\omega$. Then, we can conclude that, at the end of the round $i$ ($1\leq i\leq L_\omega$), we have: $c_p-c_q\leq L_\omega -i$. 

We can deduce that $p$ and $q$ are necessarily in unison at the end of the round $L_\omega -1$ which contradicts the construction of $\omega$. Then, the stabilization time of \PROG{SSU} is in $O(L)$ rounds both on chains and rings. Hence the result.
\end{proof}

\paragraph{Lower bound on chains.} Then, we show that any $(1,0)$-strictly-stabilizing deterministic minimal asynchronous unison on a chain needs at least $L$ rounds to stabilize where  $L$ is the largest clock drift between correct processors in the system.

In the following lemmas, \PROG{A} denotes any $(1,0)$-strictly-stabilizing deterministic minimal asynchronous unison on a chain under a central strongly fair scheduler.

\begin{lemma}\label{lem:middleClosureChain}
When a middle processor is in unison with only one of its neighbors, any enabled rule of \PROG{A} for this processor maintains this unison.
\end{lemma}

\begin{proof}
Assume that there exists a set of clock values $\{a,b,c\}$ (with $|a-b|\leq 1$ and $|b-c|\geq 2$) such that a middle processor $p$ is enabled by a rule $R$ of \PROG{A} when $c_p=b$ and neighbors clock are respectively $a$ and $c$ and that $R$ modifies $c_p$ into a value $b'$ (with $|a-b'|\geq 2$).

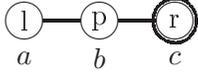
\begin{figure}
\noindent \begin{centering} \ifx\JPicScale\undefined\def\JPicScale{0.5}\fi
\unitlength \JPicScale mm
\begin{picture}(50.78,20)(0,0)
\linethickness{0.3mm}
\put(5,14.07){\circle{10}}

\linethickness{0.3mm}
\put(25,14.07){\circle{10}}

\linethickness{0.3mm}
\put(45,14.07){\circle{10}}

\linethickness{0.3mm}
\put(10,14.07){\line(1,0){10}}
\linethickness{0.3mm}
\put(30,14.07){\line(1,0){10}}
\put(5,14.07){\makebox(0,0)[cc]{l}}

\put(25,14.07){\makebox(0,0)[cc]{p}}

\put(45,14.07){\makebox(0,0)[cc]{r}}

\put(5,4.07){\makebox(0,0)[cc]{$a$}}

\put(25,4.07){\makebox(0,0)[cc]{$b$}}

\put(45,4.07){\makebox(0,0)[cc]{$c$}}

\linethickness{0.3mm}
\put(50.78,13.82){\line(0,1){0.49}}
\multiput(50.74,13.34)(0.04,0.49){1}{\line(0,1){0.49}}
\multiput(50.66,12.85)(0.08,0.48){1}{\line(0,1){0.48}}
\multiput(50.54,12.38)(0.12,0.48){1}{\line(0,1){0.48}}
\multiput(50.38,11.91)(0.16,0.46){1}{\line(0,1){0.46}}
\multiput(50.19,11.47)(0.1,0.22){2}{\line(0,1){0.22}}
\multiput(49.97,11.03)(0.11,0.22){2}{\line(0,1){0.22}}
\multiput(49.7,10.62)(0.13,0.21){2}{\line(0,1){0.21}}
\multiput(49.41,10.24)(0.15,0.19){2}{\line(0,1){0.19}}
\multiput(49.09,9.88)(0.11,0.12){3}{\line(0,1){0.12}}
\multiput(48.74,9.54)(0.12,0.11){3}{\line(1,0){0.12}}
\multiput(48.36,9.24)(0.13,0.1){3}{\line(1,0){0.13}}
\multiput(47.96,8.98)(0.2,0.13){2}{\line(1,0){0.2}}
\multiput(47.54,8.74)(0.21,0.12){2}{\line(1,0){0.21}}
\multiput(47.1,8.55)(0.22,0.1){2}{\line(1,0){0.22}}
\multiput(46.65,8.39)(0.45,0.16){1}{\line(1,0){0.45}}
\multiput(46.19,8.27)(0.46,0.12){1}{\line(1,0){0.46}}
\multiput(45.71,8.19)(0.47,0.08){1}{\line(1,0){0.47}}
\multiput(45.24,8.15)(0.48,0.04){1}{\line(1,0){0.48}}
\put(44.76,8.15){\line(1,0){0.48}}
\multiput(44.29,8.19)(0.48,-0.04){1}{\line(1,0){0.48}}
\multiput(43.81,8.27)(0.47,-0.08){1}{\line(1,0){0.47}}
\multiput(43.35,8.39)(0.46,-0.12){1}{\line(1,0){0.46}}
\multiput(42.9,8.55)(0.45,-0.16){1}{\line(1,0){0.45}}
\multiput(42.46,8.74)(0.22,-0.1){2}{\line(1,0){0.22}}
\multiput(42.04,8.98)(0.21,-0.12){2}{\line(1,0){0.21}}
\multiput(41.64,9.24)(0.2,-0.13){2}{\line(1,0){0.2}}
\multiput(41.26,9.54)(0.13,-0.1){3}{\line(1,0){0.13}}
\multiput(40.91,9.88)(0.12,-0.11){3}{\line(1,0){0.12}}
\multiput(40.59,10.24)(0.11,-0.12){3}{\line(0,-1){0.12}}
\multiput(40.3,10.62)(0.15,-0.19){2}{\line(0,-1){0.19}}
\multiput(40.03,11.03)(0.13,-0.21){2}{\line(0,-1){0.21}}
\multiput(39.81,11.47)(0.11,-0.22){2}{\line(0,-1){0.22}}
\multiput(39.62,11.91)(0.1,-0.22){2}{\line(0,-1){0.22}}
\multiput(39.46,12.38)(0.16,-0.46){1}{\line(0,-1){0.46}}
\multiput(39.34,12.85)(0.12,-0.48){1}{\line(0,-1){0.48}}
\multiput(39.26,13.34)(0.08,-0.48){1}{\line(0,-1){0.48}}
\multiput(39.22,13.82)(0.04,-0.49){1}{\line(0,-1){0.49}}
\put(39.22,13.82){\line(0,1){0.49}}
\multiput(39.22,14.32)(0.04,0.49){1}{\line(0,1){0.49}}
\multiput(39.26,14.8)(0.08,0.48){1}{\line(0,1){0.48}}
\multiput(39.34,15.29)(0.12,0.48){1}{\line(0,1){0.48}}
\multiput(39.46,15.76)(0.16,0.46){1}{\line(0,1){0.46}}
\multiput(39.62,16.23)(0.1,0.22){2}{\line(0,1){0.22}}
\multiput(39.81,16.67)(0.11,0.22){2}{\line(0,1){0.22}}
\multiput(40.03,17.11)(0.13,0.21){2}{\line(0,1){0.21}}
\multiput(40.3,17.52)(0.15,0.19){2}{\line(0,1){0.19}}
\multiput(40.59,17.9)(0.11,0.12){3}{\line(0,1){0.12}}
\multiput(40.91,18.26)(0.12,0.11){3}{\line(1,0){0.12}}
\multiput(41.26,18.6)(0.13,0.1){3}{\line(1,0){0.13}}
\multiput(41.64,18.9)(0.2,0.13){2}{\line(1,0){0.2}}
\multiput(42.04,19.16)(0.21,0.12){2}{\line(1,0){0.21}}
\multiput(42.46,19.4)(0.22,0.1){2}{\line(1,0){0.22}}
\multiput(42.9,19.59)(0.45,0.16){1}{\line(1,0){0.45}}
\multiput(43.35,19.75)(0.46,0.12){1}{\line(1,0){0.46}}
\multiput(43.81,19.87)(0.47,0.08){1}{\line(1,0){0.47}}
\multiput(44.29,19.95)(0.48,0.04){1}{\line(1,0){0.48}}
\put(44.76,19.99){\line(1,0){0.48}}
\multiput(45.24,19.99)(0.48,-0.04){1}{\line(1,0){0.48}}
\multiput(45.71,19.95)(0.47,-0.08){1}{\line(1,0){0.47}}
\multiput(46.19,19.87)(0.46,-0.12){1}{\line(1,0){0.46}}
\multiput(46.65,19.75)(0.45,-0.16){1}{\line(1,0){0.45}}
\multiput(47.1,19.59)(0.22,-0.1){2}{\line(1,0){0.22}}
\multiput(47.54,19.4)(0.21,-0.12){2}{\line(1,0){0.21}}
\multiput(47.96,19.16)(0.2,-0.13){2}{\line(1,0){0.2}}
\multiput(48.36,18.9)(0.13,-0.1){3}{\line(1,0){0.13}}
\multiput(48.74,18.6)(0.12,-0.11){3}{\line(1,0){0.12}}
\multiput(49.09,18.26)(0.11,-0.12){3}{\line(0,-1){0.12}}
\multiput(49.41,17.9)(0.15,-0.19){2}{\line(0,-1){0.19}}
\multiput(49.7,17.52)(0.13,-0.21){2}{\line(0,-1){0.21}}
\multiput(49.97,17.11)(0.11,-0.22){2}{\line(0,-1){0.22}}
\multiput(50.19,16.67)(0.1,-0.22){2}{\line(0,-1){0.22}}
\multiput(50.38,16.23)(0.16,-0.46){1}{\line(0,-1){0.46}}
\multiput(50.54,15.76)(0.12,-0.48){1}{\line(0,-1){0.48}}
\multiput(50.66,15.29)(0.08,-0.48){1}{\line(0,-1){0.48}}
\multiput(50.74,14.8)(0.04,-0.49){1}{\line(0,-1){0.49}}

\end{picture}
  \par\end{centering}\caption{\label{fig:exMiddleClosure}Configuration used in proof of Lemma \ref{lem:middleClosureChain}}
\end{figure}

Then, consider the following initial configuration: $V=\{l,p,r\}$, $E=\{\{l,p\},\{p,r\}\}$, $r$ is Byzantine and $c_l=a$, $c_p=b$, $c_r=c$ (see Figure \ref{fig:exMiddleClosure}). We can observe that this configuration satisfies $INV$. By construction, $p$ is enabled by $R$ in this configuration (recall that \PROG{A} is minimal and deterministic). If the scheduler chooses $p$, then we obtain a configuration which does not satisfy $INV$. Hence, \PROG{A} does not respect the closure of the safety property of asynchronous unison. This is contradictory with its construction.
\end{proof}

\begin{lemma}\label{lem:middleActivatedChain}
When a middle processor $p$ is in unison with only one of its neighbors (denote by $q$ the other neighbor of $p$), the following property holds: in any execution starting from this configuration in which $q$ remains not synchronized with $p$, $p$ moves its clock closer to the clock of $q$ in a finite time.
\end{lemma}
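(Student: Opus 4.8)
The plan is to argue by contradiction and to exploit that \PROG{A} is strictly-stabilizing, hence in particular self-stabilizing: from \emph{any} configuration (even a fault-free one) it must converge to $INV$, i.e.\ every two correct neighbors eventually reach unison. Suppose some execution $e$ from the stated configuration keeps $q$ unsynchronized with $p$ forever while $p$ never changes $c_p$ in the direction of $c_q$. By Lemma~\ref{lem:middleClosureChain}, along $e$ the processor $p$ stays in unison with its other neighbor $l$; moreover, since the scheduler is central, whenever $p$ is activated $c_l$ is momentarily frozen, so the new value of $c_p$ lies within $1$ of $c_l$ and therefore changes by at most a bounded amount, never jumping over the unison window of $q$. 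Thus the negated conclusion means exactly that the island carrying $p$ drifts monotonically away from $q$ (upward if $c_q<c_p$, so that $p$ only ever increments; downward if $c_q>c_p$). First I would record these structural facts and, using the left/right symmetry of the statement, reduce to the single case $c_q<c_p$.

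The heart of the argument is an indistinguishability construction that turns this run-away behaviour into a violation of convergence. Because \PROG{A} is minimal and deterministic, the moves $p$ performs are a function only of the observed triple $(c_l,c_p,c_q)$, and the faulty $q$ in $e$ merely supplies a sequence of clock values that $p$ reads. I would therefore build a second, fault-free chain $l-p-q-s_1-\cdots-s_k$ in which $q$ and the $s_i$ form a genuine correct island sitting below $p$, and schedule it so that the history seen locally by $l$ and $p$ coincides with the one in $e$. Since this configuration has two correct islands that are not in unison, $INV$ fails, so self-stabilization forces $p$ and $q$ to reach unison after finitely many steps; this requires the boundary between the islands to close, i.e.\ $p$ (or the $q$-island) to move toward the other. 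Transporting $p$'s deterministic decisions back to $e$, I would conclude that $p$ does move its clock toward $q$ at some finite time, contradicting the assumption and proving the lemma.

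The main obstacle is precisely the realisation of this comparison instance. A correct $q$-island cannot simply be frozen, since an island in unison always has an enabled rule under the strongly fair scheduler, so both boundaries may try to run away from each other, and \PROG{A} need not behave symmetrically under swapping the two directions. The delicate point is to choose the far-side environment (the width $k$, the initial clocks of $s_1,\dots,s_k$, and the schedule) so that on the side of $p$ the local history is identical to $e$ long enough that convergence must already have forced $p$ inward; equivalently, to show that if neither boundary ever moved toward the other the two correct islands would stay perpetually out of unison, which is impossible for a self-stabilizing algorithm. Once this scheduling/indistinguishability step is in place, the remaining bookkeeping — the bounded step size coming from Lemma~\ref{lem:middleClosureChain} together with the central scheduler, and the reduction to a single sign — is routine.
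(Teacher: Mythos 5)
Your overall strategy---contradiction plus a comparison instance in which $q$ is realized as a genuine processor, so that convergence of \PROG{A} forces the drift between $p$ and $q$ to close---is the right one, and it is the strategy the paper follows. But the step you yourself flag as ``the main obstacle'' is the entire content of the proof, and your proposed way around it does not work. In a \emph{fault-free} chain $l,p,q,s_1,\dots,s_k$, nothing prevents the island containing $q$ from drifting upward toward $p$; convergence to $INV$ can then be achieved entirely by $q$'s side moving while $p$ never moves toward $q$. Your fallback observation---that if \emph{neither} boundary ever moved toward the other the two islands would stay out of unison forever---only yields that \emph{at least one} of the two boundary processors moves inward, not that $p$ does, so no contradiction with the negated lemma is obtained. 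Worse, as soon as $c_q$ changes toward $c_p$, the local view of $p$ diverges from the hypothesized execution and the indistinguishability argument collapses; there is no evident choice of width $k$, initial clocks and schedule that repairs this within plain (fault-free) self-stabilization.

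The paper closes exactly this gap by invoking the fault-tolerance hypothesis rather than mere self-stabilization: it builds the four-processor chain $l,p,q,r$ with $c_l=a$, $c_p=b$, $c_q=c$, $c_r=c-1$, where $r$ is \emph{Byzantine and acts as crashed}. Since $q$ is then a correct middle processor in unison with $r$ only, Lemma~\ref{lem:middleClosureChain} pins $c_q$ to $\{c,c-1,c-2\}$ in every reachable configuration, so the $q$-side can never run toward $p$ and every execution of this system remains one in which $q$ stays unsynchronized with $p$. Because \PROG{A} is $(1,0)$-strictly-stabilizing, it must still bring the two \emph{correct} neighbors $p$ and $q$ into unison despite the Byzantine $r$, and the only way this can happen is for $p$ to move its clock toward $q$---the desired contradiction. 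This use of a crashed Byzantine processor to freeze the far side of $q$ is the missing idea in your proposal.
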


\begin{proof}
Assume that there exists a set  of clock values $\{a,b,c\}$ (with $|a-b|\leq 1$ and $|b-c|\geq 2$) such that there exists an execution $\omega$ starting from a configuration (in which $c_p=b$ and neighbors clock are respectively $a$ and $c$ -- denote by $q$ the processor such that $c_q=c$) in which $q$ remains not synchronized with $p$ and in which $p$ never moves its clock closer to the clock of $q$.

\begin{figure}
\noindent \begin{centering} \ifx\JPicScale\undefined\def\JPicScale{0.5}\fi
\unitlength \JPicScale mm
\begin{picture}(75.78,20)(0,0)
\linethickness{0.3mm}
\put(30,15){\circle{10}}

\linethickness{0.3mm}
\put(50,15){\circle{10}}

\linethickness{0.3mm}
\put(70,14.07){\circle{10}}

\linethickness{0.3mm}
\put(35,14.07){\line(1,0){10}}
\linethickness{0.3mm}
\put(55,14.07){\line(1,0){10}}
\put(30,15){\makebox(0,0)[cc]{p}}

\put(70,14.07){\makebox(0,0)[cc]{r}}

\put(10,5){\makebox(0,0)[cc]{$a$}}

\put(30,5){\makebox(0,0)[cc]{$b$}}

\put(50,5){\makebox(0,0)[cc]{$c$}}

\linethickness{0.3mm}
\put(75.78,13.82){\line(0,1){0.49}}
\multiput(75.74,13.34)(0.04,0.49){1}{\line(0,1){0.49}}
\multiput(75.66,12.85)(0.08,0.48){1}{\line(0,1){0.48}}
\multiput(75.54,12.38)(0.12,0.48){1}{\line(0,1){0.48}}
\multiput(75.38,11.91)(0.16,0.46){1}{\line(0,1){0.46}}
\multiput(75.19,11.47)(0.1,0.22){2}{\line(0,1){0.22}}
\multiput(74.97,11.03)(0.11,0.22){2}{\line(0,1){0.22}}
\multiput(74.7,10.62)(0.13,0.21){2}{\line(0,1){0.21}}
\multiput(74.41,10.24)(0.15,0.19){2}{\line(0,1){0.19}}
\multiput(74.09,9.88)(0.11,0.12){3}{\line(0,1){0.12}}
\multiput(73.74,9.54)(0.12,0.11){3}{\line(1,0){0.12}}
\multiput(73.36,9.24)(0.13,0.1){3}{\line(1,0){0.13}}
\multiput(72.96,8.98)(0.2,0.13){2}{\line(1,0){0.2}}
\multiput(72.54,8.74)(0.21,0.12){2}{\line(1,0){0.21}}
\multiput(72.1,8.55)(0.22,0.1){2}{\line(1,0){0.22}}
\multiput(71.65,8.39)(0.45,0.16){1}{\line(1,0){0.45}}
\multiput(71.19,8.27)(0.46,0.12){1}{\line(1,0){0.46}}
\multiput(70.71,8.19)(0.47,0.08){1}{\line(1,0){0.47}}
\multiput(70.24,8.15)(0.48,0.04){1}{\line(1,0){0.48}}
\put(69.76,8.15){\line(1,0){0.48}}
\multiput(69.29,8.19)(0.48,-0.04){1}{\line(1,0){0.48}}
\multiput(68.81,8.27)(0.47,-0.08){1}{\line(1,0){0.47}}
\multiput(68.35,8.39)(0.46,-0.12){1}{\line(1,0){0.46}}
\multiput(67.9,8.55)(0.45,-0.16){1}{\line(1,0){0.45}}
\multiput(67.46,8.74)(0.22,-0.1){2}{\line(1,0){0.22}}
\multiput(67.04,8.98)(0.21,-0.12){2}{\line(1,0){0.21}}
\multiput(66.64,9.24)(0.2,-0.13){2}{\line(1,0){0.2}}
\multiput(66.26,9.54)(0.13,-0.1){3}{\line(1,0){0.13}}
\multiput(65.91,9.88)(0.12,-0.11){3}{\line(1,0){0.12}}
\multiput(65.59,10.24)(0.11,-0.12){3}{\line(0,-1){0.12}}
\multiput(65.3,10.62)(0.15,-0.19){2}{\line(0,-1){0.19}}
\multiput(65.03,11.03)(0.13,-0.21){2}{\line(0,-1){0.21}}
\multiput(64.81,11.47)(0.11,-0.22){2}{\line(0,-1){0.22}}
\multiput(64.62,11.91)(0.1,-0.22){2}{\line(0,-1){0.22}}
\multiput(64.46,12.38)(0.16,-0.46){1}{\line(0,-1){0.46}}
\multiput(64.34,12.85)(0.12,-0.48){1}{\line(0,-1){0.48}}
\multiput(64.26,13.34)(0.08,-0.48){1}{\line(0,-1){0.48}}
\multiput(64.22,13.82)(0.04,-0.49){1}{\line(0,-1){0.49}}
\put(64.22,13.82){\line(0,1){0.49}}
\multiput(64.22,14.32)(0.04,0.49){1}{\line(0,1){0.49}}
\multiput(64.26,14.8)(0.08,0.48){1}{\line(0,1){0.48}}
\multiput(64.34,15.29)(0.12,0.48){1}{\line(0,1){0.48}}
\multiput(64.46,15.76)(0.16,0.46){1}{\line(0,1){0.46}}
\multiput(64.62,16.23)(0.1,0.22){2}{\line(0,1){0.22}}
\multiput(64.81,16.67)(0.11,0.22){2}{\line(0,1){0.22}}
\multiput(65.03,17.11)(0.13,0.21){2}{\line(0,1){0.21}}
\multiput(65.3,17.52)(0.15,0.19){2}{\line(0,1){0.19}}
\multiput(65.59,17.9)(0.11,0.12){3}{\line(0,1){0.12}}
\multiput(65.91,18.26)(0.12,0.11){3}{\line(1,0){0.12}}
\multiput(66.26,18.6)(0.13,0.1){3}{\line(1,0){0.13}}
\multiput(66.64,18.9)(0.2,0.13){2}{\line(1,0){0.2}}
\multiput(67.04,19.16)(0.21,0.12){2}{\line(1,0){0.21}}
\multiput(67.46,19.4)(0.22,0.1){2}{\line(1,0){0.22}}
\multiput(67.9,19.59)(0.45,0.16){1}{\line(1,0){0.45}}
\multiput(68.35,19.75)(0.46,0.12){1}{\line(1,0){0.46}}
\multiput(68.81,19.87)(0.47,0.08){1}{\line(1,0){0.47}}
\multiput(69.29,19.95)(0.48,0.04){1}{\line(1,0){0.48}}
\put(69.76,19.99){\line(1,0){0.48}}
\multiput(70.24,19.99)(0.48,-0.04){1}{\line(1,0){0.48}}
\multiput(70.71,19.95)(0.47,-0.08){1}{\line(1,0){0.47}}
\multiput(71.19,19.87)(0.46,-0.12){1}{\line(1,0){0.46}}
\multiput(71.65,19.75)(0.45,-0.16){1}{\line(1,0){0.45}}
\multiput(72.1,19.59)(0.22,-0.1){2}{\line(1,0){0.22}}
\multiput(72.54,19.4)(0.21,-0.12){2}{\line(1,0){0.21}}
\multiput(72.96,19.16)(0.2,-0.13){2}{\line(1,0){0.2}}
\multiput(73.36,18.9)(0.13,-0.1){3}{\line(1,0){0.13}}
\multiput(73.74,18.6)(0.12,-0.11){3}{\line(1,0){0.12}}
\multiput(74.09,18.26)(0.11,-0.12){3}{\line(0,-1){0.12}}
\multiput(74.41,17.9)(0.15,-0.19){2}{\line(0,-1){0.19}}
\multiput(74.7,17.52)(0.13,-0.21){2}{\line(0,-1){0.21}}
\multiput(74.97,17.11)(0.11,-0.22){2}{\line(0,-1){0.22}}
\multiput(75.19,16.67)(0.1,-0.22){2}{\line(0,-1){0.22}}
\multiput(75.38,16.23)(0.16,-0.46){1}{\line(0,-1){0.46}}
\multiput(75.54,15.76)(0.12,-0.48){1}{\line(0,-1){0.48}}
\multiput(75.66,15.29)(0.08,-0.48){1}{\line(0,-1){0.48}}
\multiput(75.74,14.8)(0.04,-0.49){1}{\line(0,-1){0.49}}

\linethickness{0.3mm}
\put(10,15){\circle{10}}

\linethickness{0.3mm}
\put(15,15){\line(1,0){10}}
\put(10,15){\makebox(0,0)[cc]{l}}

\put(50,15){\makebox(0,0)[cc]{q}}

\put(70,5){\makebox(0,0)[cc]{$c-1$}}

\end{picture}
  \par\end{centering}\caption{\label{fig:exMiddleActivated}Configuration used in proof of Lemma \ref{lem:middleActivatedChain}}
\end{figure}
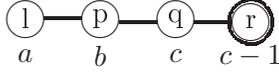

We deal with the case where $b>c$ (the case where $b<c$ is similar). Then, consider the following initial configuration $s_0$: $V=\{l,p,q,r\}$, $E=\{\{l,p\},\{p,q\},\{q,r\}\}$, $r$ is Byzantine and $c_l=a$, $c_p=b$, $c_q=c$, $c_r=c-1$ (see Figure \ref{fig:exMiddleActivated}). If $r$ acts as a crashed processor, its clock value remains constant. Then, by Lemma \ref{lem:middleClosureChain}, we have $c_q\in\{c,c-1,c-2\}$ in any state of any execution starting from $s_0$. Hence, $p$ can not distinguish this execution from $\omega$ (recall that \PROG{A} is minimal and deterministic). Consequently, there exists an execution starting from $s_0$ such that $c_p\geq b$ and $c_q\leq c$ in any state. This contradicts the convergence property of \PROG{A}.
\end{proof}

\begin{lemma}\label{lem:endEnabledChain}
When an end processor is in unison with its neighbor, there exists an enabled rule of \PROG{A} for this processor.
\end{lemma}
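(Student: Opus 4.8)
The plan is to argue by contradiction, mirroring the constructions used in Lemmas~\ref{lem:middleClosureChain} and~\ref{lem:middleActivatedChain}: I would assume that some unison configuration leaves the end processor disabled, embed that configuration into a small chain with a single Byzantine neighbor, and derive a violation of the liveness clause of asynchronous unison.

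Concretely, suppose for contradiction that there exist clock values $a$ and $b$ with $|a-b|\leq 1$ such that an end processor $p$ whose clock equals $a$ and whose unique neighbor holds clock $b$ has no rule of \PROG{A} enabled. Because \PROG{A} is minimal and deterministic, whether $p$ is enabled is a function of the pair $(a,b)$ alone, so this disabledness transfers verbatim to any chain in which $p$ sits next to a neighbor holding value $b$. I would then take the two-node chain $V=\{p,q\}$, $E=\{\{p,q\}\}$, declare $q$ Byzantine, and set $c_p=a$, $c_q=b$; call this configuration $\gamma$. Note that $\gamma$ conforms to $INV$: the only neighbor of the correct processor $p$ is faulty, so $p$ has no correct neighbor and the unison requirement on $p$ is vacuous.

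Next I would let $q$ emulate a crash, keeping $c_q=b$ forever (the model explicitly permits a faulty processor to take no steps, or equivalently to rewrite the same value). Since $p$ is disabled at $\gamma$ and $q$ never changes its clock, no clock value in the system ever changes: $\gamma$ is a fixed point and the resulting run is constant at $\gamma$. This run is legitimate even under a central strongly fair scheduler, because $p$ is never enabled and hence carries no fairness obligation. In this run the clock of the correct processor $p$ is never incremented, so the run fails the liveness property of asynchronous unison.

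Finally I would close the argument with the strict-stabilization hypothesis. Since \PROG{A} is $(1,0)$-strictly-stabilizing, every run from an arbitrary configuration reaches a configuration conforming to its invariant $I_{10}$; as the run above is constant at $\gamma$, the only configuration it contains is $\gamma$, forcing $\gamma\in I_{10}$. But then the run starts from the invariant and must satisfy asynchronous unison for the correct processor $p$, in particular its liveness --- contradicting the previous paragraph. Hence no disabling pair $(a,b)$ exists, and an end processor in unison with its neighbor always has an enabled rule. The step I expect to require the most care is this last one: making precise that freezing the Byzantine neighbor yields a bona fide maximal run and that, because this run never leaves $\gamma$, the convergence clause of strict stabilization pins $\gamma$ inside the invariant, so that the liveness obligation genuinely applies to $p$.
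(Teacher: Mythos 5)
Your proposal is correct and follows essentially the same route as the paper: assume a disabling unison pair $(a,b)$, instantiate it on the two-processor chain with the neighbor Byzantine, freeze the Byzantine neighbor, and contradict liveness. The extra paragraph pinning $\gamma$ inside the invariant via the convergence clause is a useful fleshing-out of the paper's one-line appeal to "the liveness property of $(1,0)$-strictly-stabilizing asynchronous unison," but it is not a different argument.
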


\begin{proof}
Assume that there exists a set of clock values $\{a,b\}$ (with $|a-b|\leq 1$) such that an end processor $p$ is not enabled by any rule of \PROG{A} when $c_p=a$ and its neighbor clock is $b$.

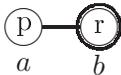
\begin{figure}
\noindent \begin{centering} 
\ifx\JPicScale\undefined\def\JPicScale{0.5}\fi
\unitlength \JPicScale mm
\begin{picture}(40.78,20.93)(0,0)
\linethickness{0.3mm}
\put(15,15){\circle{10}}

\linethickness{0.3mm}
\put(35,15){\circle{10}}

\linethickness{0.3mm}
\put(20,15){\line(1,0){10}}
\put(15,15){\makebox(0,0)[cc]{p}}

\put(35,15){\makebox(0,0)[cc]{r}}

\linethickness{0.3mm}
\put(40.78,14.75){\line(0,1){0.49}}
\multiput(40.74,14.27)(0.04,0.49){1}{\line(0,1){0.49}}
\multiput(40.66,13.78)(0.08,0.48){1}{\line(0,1){0.48}}
\multiput(40.54,13.31)(0.12,0.48){1}{\line(0,1){0.48}}
\multiput(40.38,12.84)(0.16,0.46){1}{\line(0,1){0.46}}
\multiput(40.19,12.4)(0.1,0.22){2}{\line(0,1){0.22}}
\multiput(39.97,11.96)(0.11,0.22){2}{\line(0,1){0.22}}
\multiput(39.7,11.55)(0.13,0.21){2}{\line(0,1){0.21}}
\multiput(39.41,11.17)(0.15,0.19){2}{\line(0,1){0.19}}
\multiput(39.09,10.81)(0.11,0.12){3}{\line(0,1){0.12}}
\multiput(38.74,10.47)(0.12,0.11){3}{\line(1,0){0.12}}
\multiput(38.36,10.17)(0.13,0.1){3}{\line(1,0){0.13}}
\multiput(37.96,9.91)(0.2,0.13){2}{\line(1,0){0.2}}
\multiput(37.54,9.67)(0.21,0.12){2}{\line(1,0){0.21}}
\multiput(37.1,9.48)(0.22,0.1){2}{\line(1,0){0.22}}
\multiput(36.65,9.32)(0.45,0.16){1}{\line(1,0){0.45}}
\multiput(36.19,9.2)(0.46,0.12){1}{\line(1,0){0.46}}
\multiput(35.71,9.12)(0.47,0.08){1}{\line(1,0){0.47}}
\multiput(35.24,9.08)(0.48,0.04){1}{\line(1,0){0.48}}
\put(34.76,9.08){\line(1,0){0.48}}
\multiput(34.29,9.12)(0.48,-0.04){1}{\line(1,0){0.48}}
\multiput(33.81,9.2)(0.47,-0.08){1}{\line(1,0){0.47}}
\multiput(33.35,9.32)(0.46,-0.12){1}{\line(1,0){0.46}}
\multiput(32.9,9.48)(0.45,-0.16){1}{\line(1,0){0.45}}
\multiput(32.46,9.67)(0.22,-0.1){2}{\line(1,0){0.22}}
\multiput(32.04,9.91)(0.21,-0.12){2}{\line(1,0){0.21}}
\multiput(31.64,10.17)(0.2,-0.13){2}{\line(1,0){0.2}}
\multiput(31.26,10.47)(0.13,-0.1){3}{\line(1,0){0.13}}
\multiput(30.91,10.81)(0.12,-0.11){3}{\line(1,0){0.12}}
\multiput(30.59,11.17)(0.11,-0.12){3}{\line(0,-1){0.12}}
\multiput(30.3,11.55)(0.15,-0.19){2}{\line(0,-1){0.19}}
\multiput(30.03,11.96)(0.13,-0.21){2}{\line(0,-1){0.21}}
\multiput(29.81,12.4)(0.11,-0.22){2}{\line(0,-1){0.22}}
\multiput(29.62,12.84)(0.1,-0.22){2}{\line(0,-1){0.22}}
\multiput(29.46,13.31)(0.16,-0.46){1}{\line(0,-1){0.46}}
\multiput(29.34,13.78)(0.12,-0.48){1}{\line(0,-1){0.48}}
\multiput(29.26,14.27)(0.08,-0.48){1}{\line(0,-1){0.48}}
\multiput(29.22,14.75)(0.04,-0.49){1}{\line(0,-1){0.49}}
\put(29.22,14.75){\line(0,1){0.49}}
\multiput(29.22,15.25)(0.04,0.49){1}{\line(0,1){0.49}}
\multiput(29.26,15.73)(0.08,0.48){1}{\line(0,1){0.48}}
\multiput(29.34,16.22)(0.12,0.48){1}{\line(0,1){0.48}}
\multiput(29.46,16.69)(0.16,0.46){1}{\line(0,1){0.46}}
\multiput(29.62,17.16)(0.1,0.22){2}{\line(0,1){0.22}}
\multiput(29.81,17.6)(0.11,0.22){2}{\line(0,1){0.22}}
\multiput(30.03,18.04)(0.13,0.21){2}{\line(0,1){0.21}}
\multiput(30.3,18.45)(0.15,0.19){2}{\line(0,1){0.19}}
\multiput(30.59,18.83)(0.11,0.12){3}{\line(0,1){0.12}}
\multiput(30.91,19.19)(0.12,0.11){3}{\line(1,0){0.12}}
\multiput(31.26,19.53)(0.13,0.1){3}{\line(1,0){0.13}}
\multiput(31.64,19.83)(0.2,0.13){2}{\line(1,0){0.2}}
\multiput(32.04,20.09)(0.21,0.12){2}{\line(1,0){0.21}}
\multiput(32.46,20.33)(0.22,0.1){2}{\line(1,0){0.22}}
\multiput(32.9,20.52)(0.45,0.16){1}{\line(1,0){0.45}}
\multiput(33.35,20.68)(0.46,0.12){1}{\line(1,0){0.46}}
\multiput(33.81,20.8)(0.47,0.08){1}{\line(1,0){0.47}}
\multiput(34.29,20.88)(0.48,0.04){1}{\line(1,0){0.48}}
\put(34.76,20.92){\line(1,0){0.48}}
\multiput(35.24,20.92)(0.48,-0.04){1}{\line(1,0){0.48}}
\multiput(35.71,20.88)(0.47,-0.08){1}{\line(1,0){0.47}}
\multiput(36.19,20.8)(0.46,-0.12){1}{\line(1,0){0.46}}
\multiput(36.65,20.68)(0.45,-0.16){1}{\line(1,0){0.45}}
\multiput(37.1,20.52)(0.22,-0.1){2}{\line(1,0){0.22}}
\multiput(37.54,20.33)(0.21,-0.12){2}{\line(1,0){0.21}}
\multiput(37.96,20.09)(0.2,-0.13){2}{\line(1,0){0.2}}
\multiput(38.36,19.83)(0.13,-0.1){3}{\line(1,0){0.13}}
\multiput(38.74,19.53)(0.12,-0.11){3}{\line(1,0){0.12}}
\multiput(39.09,19.19)(0.11,-0.12){3}{\line(0,-1){0.12}}
\multiput(39.41,18.83)(0.15,-0.19){2}{\line(0,-1){0.19}}
\multiput(39.7,18.45)(0.13,-0.21){2}{\line(0,-1){0.21}}
\multiput(39.97,18.04)(0.11,-0.22){2}{\line(0,-1){0.22}}
\multiput(40.19,17.6)(0.1,-0.22){2}{\line(0,-1){0.22}}
\multiput(40.38,17.16)(0.16,-0.46){1}{\line(0,-1){0.46}}
\multiput(40.54,16.69)(0.12,-0.48){1}{\line(0,-1){0.48}}
\multiput(40.66,16.22)(0.08,-0.48){1}{\line(0,-1){0.48}}
\multiput(40.74,15.73)(0.04,-0.49){1}{\line(0,-1){0.49}}

\put(15,5){\makebox(0,0)[cc]{$a$}}

\put(35,5){\makebox(0,0)[cc]{$b$}}

\end{picture}
  \par\end{centering}\caption{\label{fig:exEndEnabled}Configuration used in proof of Lemma \ref{lem:endEnabledChain}}
\end{figure}

Then, consider the following initial configuration: $V=\{p,r\}$, $E=\{\{p,r\}\}$, $r$ is Byzantine and $c_p=a$, $c_r=b$ (see Figure \ref{fig:exEndEnabled}). By construction, $p$ is not enabled in this configuration (recall that \PROG{A} is minimal and deterministic). Assume now that $r$ acts as a crashed processor. Then, we can observe that $p$ is never enabled in this execution, that contradicts the liveness property of $(1,0)$-strictly-stabilizing asynchronous unison.
\end{proof}

If we consider the execution described in the proof of Lemma \ref{lem:endEnabledChain}, we can observe that $p$ is infinitely often activated (by fairness assumption) and that its clock is always in the set $\{b-1,b,b+1\}$ (by closure of \PROG{A}). Since \PROG{A} is minimal and deterministic, we can deduce that values of $c_p$ over this execution follow a given cycle. We characterize now \PROG{A} by this cycle. More formally, we say that:

\begin{enumerate}
\item \PROG{A} is of type \textbf{1} if the cycle is $b,b+1,b,b+1,\ldots$.
\item \PROG{A} is of type \textbf{2} if the cycle is $b,b-1,b,b-1,\ldots$.
\item \PROG{A} is of type \textbf{3} if the cycle is $b,b+1,b-1,b,b+1,b-1,\ldots$.
\end{enumerate}

Notice that the protocol \PROG{SSU} is of type \textbf{1}.

\begin{theorem}\label{trm:lowerBoundChain}
The stabilization time of any $(1,0)$-strictly-stabilizing deterministic minimal asynchronous unison on chains is in $\Omega(L)$ where $L$ is the maximum clock drift between two correct neighbors in the initial configuration.
\end{theorem}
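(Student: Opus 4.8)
The plan is to exhibit, for an arbitrary deterministic minimal $(1,0)$-strictly-stabilizing algorithm \PROG{A}, a family of initial configurations whose maximum correct-neighbour drift is $L$, together with a central strongly fair execution that needs $\Omega(L)$ rounds to reach a configuration conforming to $INV$. The guiding principle is the mirror image of the upper bound (Theorem~\ref{trm:upperBound}): there the boundary clock of the leftmost island was shown to move toward its neighbour by at least one per round, giving $O(L)$; here I want an execution in which that boundary moves by \emph{at most} one per round, giving $\Omega(L)$.

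First I would fix the instance. Take a chain consisting of a low correct island $l_0,\dots,l_m=p$ initialized at clock value $0$, an adjacent high correct island $q=l_{m+1},l_{m+2}$ of width two initialized at value $V$, and the single Byzantine processor $z$ as the right end, adjacent to $l_{m+2}$, acting as a frozen wall holding the constant value $V$. The only correct neighbours with large drift are $p$ (value $0$) and $q$ (value $V$), so the maximum correct-neighbour drift is $L=V$. By the remark following Lemma~\ref{lem:endEnabledChain}, the wall-facing end processor $l_{m+2}$ stays in $\{V-1,V,V+1\}$ (its behaviour is one of the three types, but all three remain within drift one of $V$); since the high island has width two and never breaks (Lemma~\ref{lem:islandClosureChain}), its boundary $q$ stays in $[V-2,V+2]$ throughout the run. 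Thus the high side is pinned near $V$ and cannot rush down to meet $p$.

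Next I would establish the rate bound, which is the heart of the argument: in a suitable execution the boundary clock $c_p$ increases by at most one per asynchronous round. The two ingredients are (i) a per-step bound and (ii) a per-round bound. For (i), every correct processor is, by Lemma~\ref{lem:islandClosureChain}, permanently in unison with at least one correct island-neighbour; when it acts, Lemma~\ref{lem:middleClosureChain} (together with its two-neighbour analogue, which follows from closure of the safety part of $INV$) forces the new value to stay within drift one of that neighbour, so any single activation changes a clock by at most one. For (ii), I would use the round-robin central scheduler that activates each correct processor exactly once per round, and order the activations inside the low island from the boundary $p$ inward toward $l_0$; then each processor, when activated, still sees its inward neighbour at its previous-round value, which prevents a within-round cascade and forces $c_p^{(t+1)}\le c_p^{(t)}+1$. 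Lemmas~\ref{lem:endEnabledChain} and~\ref{lem:middleActivatedChain} guarantee that the processors that must move are indeed enabled, so rounds are well defined and the execution is strongly fair. Combining the two sides, after $t$ rounds $c_q-c_p\ge (V-2)-t$, so for every $t<V-3$ the correct neighbours $p$ and $q$ differ by more than one and the configuration violates $INV$; hence stabilization takes at least $V-3=\Omega(L)$ rounds.

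The step I expect to be the main obstacle is precisely the per-round rate bound (ii): ruling out that one round closes more than a constant amount of the drift. The danger is twofold --- a rule could in principle set a clock far from its old value, and a chain of activations within a single round could propagate a large increase from the boundary inward or outward. The per-step bound from Lemma~\ref{lem:middleClosureChain} kills the first danger, and the boundary-inward activation order kills the second; making the latter airtight (showing that $p$ remains the maximum of the low island so that the drift-one coupling really caps its growth at one per round, and checking the degenerate small-width cases) is where the careful bookkeeping lies. A secondary point to handle cleanly is the dependence on the type of \PROG{A}: although the crude bound above is type-independent, if one wants the tight constant matching Theorem~\ref{trm:upperBound} one orients the wall according to whether \PROG{A} is of type \textbf{1}, \textbf{2} or \textbf{3}, pinning the island on the side toward which the end processor is forced to oscillate.
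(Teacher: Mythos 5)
Your overall strategy is the right one and matches the paper's in spirit: fix a configuration with correct-neighbour drift $L$ and use the closure lemmas to show that a deterministic minimal $(1,0)$-strictly-stabilizing algorithm can only close that drift at a bounded rate per round. But several supporting steps do not hold as you state them. Lemma~\ref{lem:islandClosureChain} is proven for \PROG{SSU} only, so it cannot be invoked for an arbitrary \PROG{A}; the ``islands never break'' property must be re-derived from Lemma~\ref{lem:middleClosureChain} and its two-neighbour analogue (which you do sketch). The ``type'' remark after Lemma~\ref{lem:endEnabledChain} concerns degree-one end processors, whereas your wall-facing processor $l_{m+2}$ is a middle processor. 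More importantly, $INV$ places no constraint on a correct processor's drift relative to a Byzantine neighbour, so the wall does not pin the high island ``by closure'' at all; any pinning has to go through the indistinguishability construction used in the proof of Lemma~\ref{lem:middleClosureChain}, where the Byzantine neighbour is replaced by a correct one in a small auxiliary system that satisfies $INV$. These points are repairable, and the pinning is in fact dispensable: a per-activation bound on both boundary clocks (which is at most two per activation, not one, since the processor only has to stay within drift one of the neighbour it is in unison with) already yields a bounded per-round closure rate.

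The genuine gap is the per-round bound itself. A round is defined as the smallest run segment in which every correct processor executes a step, and your rate bound silently assumes each round can be realized with exactly one activation per processor. Lemma~\ref{lem:middleActivatedChain} only guarantees that a processor moves \emph{eventually}; it does not guarantee that a processor is enabled at the moment your round-robin schedule reaches it, and processors in unison with both neighbours (the island interiors, and $l_{m+2}$ itself) are covered by none of Lemmas~\ref{lem:middleClosureChain}, \ref{lem:middleActivatedChain} and \ref{lem:endEnabledChain}, so nothing rules out that completing a round requires many intervening activations of their neighbours. If finishing a single round forces $\Theta(L)$ extra activations of the boundary processors, the two islands can caterpillar together within $O(1)$ rounds and the lower bound collapses. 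This is precisely the difficulty the paper's proof is engineered to avoid: it works on a fixed four-processor chain, uses the type characterization of \PROG{A} to determine exactly which rule the end processor executes, and exhibits the round-by-round evolution explicitly, so that each round provably consists of a constant number of steps. To close your argument you would need to do the same, e.g.\ shrink the low island to width two and verify enabledness of every processor at every point of the schedule.
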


\begin{proof}
Assume that \PROG{A} is a $(1,0)$-strictly-stabilizing deterministic minimal asynchronous unison on chains.

We provide the proof of this theorem in the case where \PROG{A} is of type \textbf{1} since other cases are similar.

Let $a,t$ be natural numbers. Consider the following initial configuration $s^0$: $V=\{p,q,r,s\}$, $E=\{\{p,q\},\{q,r\},\{r,s\}\}$, $s$ is Byzantine and $c_p=a+2t$, $c_q=a+2t$, $c_r=a$, $c_s=a$ (see Figure \ref{fig:exOmega}). Hence, we have a maximal clock drift of $L=2t$.

\begin{figure}
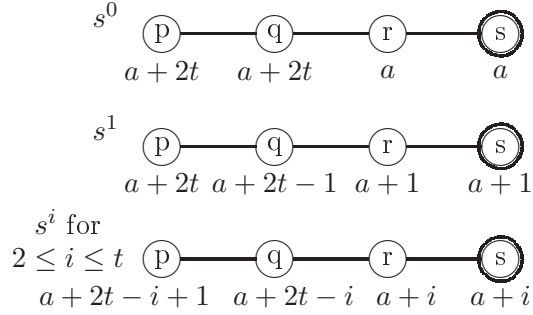

\noindent \begin{centering} \include{ExOmega}
  \par\end{centering}\caption{\label{fig:exOmega}Configurations used in proof of Theorem \ref{trm:lowerBoundChain}}
\end{figure}

Note that $p$ is enabled to take the value $a+2t+1$ in $s^0$ (by Lemma \ref{lem:endEnabledChain} and the fact that \PROG{A} is minimal and of type \textbf{1}). By Lemmas \ref{lem:middleActivatedChain}, \ref{lem:middleClosureChain}, and the fact that \PROG{A} is minimal, we can deduce that $q$ is enabled to take the value $a+2t-1$ only when $c_p=a+2t$. Similar reasoning holds for $r$ which is enabled to take the value $a+1$ when $c_s=a$.

Then, the following execution of \PROG{A} is possible: $p$ is activated and takes value $a+2t+1$, $p$ is activated and takes value $a+2t$ ($p$ is enabled by Lemma \ref{lem:endEnabledChain} and the new value is determined by the type of \PROG{A}), $q$ is activated and takes value $a+2t-1$, $r$ is activated and takes value $a+1$ and $s$ takes the value $a+1$ (recall that $s$ is byzantine). We obtain the configuration $s^1$ depicted in Figure \ref{fig:exOmega}.

We can observe that the first round $R_1$ of our execution ends in $s^1$ and that we have now a maximal clock drift of $a+2(t-1)$.

By the same reasoning, we can construct a sequence of $t-1$ rounds $R_{i}=s^{i-1}\ldots s^i$ ($2\leq i\leq t$) as follows: $p$ is activated and takes value $a+2t+1-i$, $q$ is activated and takes value $a+2t-i$, $r$ is activated and takes value $a+i$ and $s$ takes the value $i$. We obtain the configuration $s^i$ at the end of round $R_i$ ($2\leq i\leq t$) depicted in Figure \ref{fig:exOmega}. At the end of round $R_i$ ($2\leq i\leq t$), we have a maximal clock drift of $2(t-i)$.

We can conclude that, at the end of the round $R_{t-1}$, the maximal clock drift is $2$ whereas, at the end of the round $R_t$, the maximal clock drift is $1$ (since we have $c_p-c_q=1$ and $c_q-c_r=0$). By construction of $t$, we can conclude that \PROG{A} needs $\Omega(L)$ rounds to stabilize.
\end{proof}

\paragraph{Lower bound on rings.} Then, we show that any $(1,0)$-strictly-stabilizing deterministic minimal asynchronous unison on a chain needs at least $L$ rounds to stabilize where  $L$ is the largest clock drift between correct processors in the system.

In the following lemmas, \PROG{A} denotes any $(1,0)$-strictly-stabilizing deterministic minimal asynchronous unison on a ring under a central strongly fair scheduler.

\begin{lemma}\label{lem:middleClosureRing}
When a processor is in unison with only one of its neighbors, any enabled rule of \PROG{A} for this processor maintains this unison.
\end{lemma}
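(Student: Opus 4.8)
The plan is to mirror the contradiction argument of Lemma~\ref{lem:middleClosureChain}, replacing the three-processor chain used there by a minimal ring in which one processor is Byzantine. Suppose, for contradiction, that the statement fails. Then there exist clock values $a,b,c$ with $|a-b|\le 1$ and $|b-c|\ge 2$ together with an enabled rule $R$ of \PROG{A} that, when executed by a processor $p$ whose two neighbors carry clocks $a$ and $c$ and whose own clock is $b$, changes $c_p$ to some value $b'$ with $|a-b'|\ge 2$; that is, $R$ breaks the unison between $p$ and the neighbor whose clock equals $a$.

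First I would build the witness network: take the three-processor ring $V=\{l,p,r\}$ with edges $\{l,p\},\{p,r\},\{r,l\}$, declare $r$ Byzantine, and set $c_l=a$, $c_p=b$, $c_r=c$. The only correct-correct edge is $\{l,p\}$, so since $|a-b|\le 1$ this configuration conforms to $INV$; note that $l$ need not be in unison with the faulty processor $r$, and that $p$'s local view $(c_l,c_p,c_r)=(a,b,c)$ coincides exactly with the view assumed in the hypothesis. Then I would invoke that \PROG{A} is minimal and deterministic: enabledness and the produced value depend only on the clocks a processor reads, so $R$ is enabled at $p$ in this configuration and, if the scheduler selects $p$, it drives $c_p$ to $b'$. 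The resulting configuration satisfies $|c_l-c_p|=|a-b'|\ge 2$ on the correct-correct edge $\{l,p\}$, hence violates $INV$. Since this run starts from a configuration conforming to the $(1,0)$-invariant of \PROG{A} and yet leaves the safety property of asynchronous unison, we obtain the desired contradiction, establishing the lemma.

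The only delicate point, and the place where I would be most careful, is the \emph{faithfulness of the embedding}: I must argue that $p$ cannot behave differently in this three-processor ring than it does under the abstract hypothesis. This relies precisely on minimality (there are no auxiliary variables in which a processor could encode the ring length or its position, so its behaviour is a function of its clock triple alone) and on determinism (a given local view yields a unique enabled action and a unique successor value). I would also verify the corner case created by the extra ring edge $\{l,r\}$: because $r$ is Byzantine and therefore excluded from the scope of $INV$, the edge $\{l,r\}$ imposes no unison constraint on $l$, so the construction remains a legitimate $INV$-conforming configuration with exactly one fault, keeping us within the $(f{=}1)$ model. With these two observations in place, the contradiction argument transfers from the chain setting of Lemma~\ref{lem:middleClosureChain} without further modification.
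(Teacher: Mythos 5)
Your proposal is correct and matches the paper's own argument: the paper likewise proves this lemma by instantiating the contradiction construction of Lemma~\ref{lem:middleClosureChain} on a three-processor ring with one Byzantine node. Your additional remarks on minimality/determinism guaranteeing the faithfulness of the embedding and on the extra Byzantine-incident edge imposing no unison constraint are exactly the (implicit) justifications the paper relies on.
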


\begin{proof}
The proof of Lemma \ref{lem:middleClosureChain} directly applies here if we consider the following system: $V=\{p,q,r\}$ and $E=\{\{p,q\},\{q,r\},\{r,p\}\}$.
\end{proof}

\begin{lemma}\label{lem:middleActivatedRing}
When a processor $p$ is in unison with only one of its neighbors (denote by $q$ the other neighbor of $p$), the following property holds: in any execution starting from this configuration in which $q$ remains not synchronized with $p$, $p$ moves its clock closer to the clock of $q$ in a finite time.
\end{lemma}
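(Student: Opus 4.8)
The plan is to adapt the proof of Lemma~\ref{lem:middleActivatedChain} --- its chain analogue --- by realizing the relevant local situation of $p$ on a small ring rather than on a path, exactly as the proof of Lemma~\ref{lem:middleClosureRing} adapted Lemma~\ref{lem:middleClosureChain}. I would argue by contradiction: suppose there are clock values $\{a,b,c\}$ with $|a-b|\leq 1$ and $|b-c|\geq 2$ and an execution $\omega$ starting from a configuration in which $c_p=b$, the in-unison neighbor of $p$ holds $a$, and the other neighbor $q$ holds $c$, such that along $\omega$ the processor $q$ stays not synchronized with $p$ while $p$ never moves its clock toward $q$. Treating the case $b>c$ (the case $b<c$ being symmetric), I would build the four-node ring $V=\{l,p,q,r\}$ with $E=\{\{l,p\},\{p,q\},\{q,r\},\{r,l\}\}$, declare $r$ Byzantine, and set $c_l=a$, $c_p=b$, $c_q=c$, with $r$ acting as a crashed processor frozen at $c-1$.

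First I would pin the two correct neighbors of the chain picture using the ring closure lemma. Since $q$ is in unison with $r$ ($|c-(c-1)|=1$) but not with $p$ ($|b-c|\geq 2$), Lemma~\ref{lem:middleClosureRing} forces $c_q\in\{c-2,c-1,c\}$ throughout any execution from this configuration, so $b-c_q\geq b-c\geq 2$ and $q$ never becomes synchronized with $p$. Similarly, because $b>c$ gives $a\geq b-1\geq c+1$, the node $l$ is in unison with $p$ but not with the frozen $r$, so Lemma~\ref{lem:middleClosureRing} keeps $l$ in unison with $p$; thus $l$ continues to present $p$ with an in-unison neighbor exactly as in $\omega$. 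This is the point where the ring argument departs from the chain one: on the chain $l$ was an isolated end processor, whereas here $l$ is a middle processor adjacent to the fault, and one must invoke the closure lemma to guarantee it does not drift away from $p$.

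With both neighbors constrained, I would invoke minimality and determinism of \PROG{A}: the behavior of $p$ is a function of $c_p$ and of the two neighbor clocks it reads, and the ring presents $p$ with the same local views as $\omega$ (an in-unison neighbor near $a$ and a far neighbor near $c$). Hence the scheduler can drive the ring so that $p$ replays its decisions from $\omega$ and never moves toward $q$, yielding an execution in which $c_p\geq b$ and $c_q\leq c$ in every configuration. Since $b-c\geq 2$, the two correct processors $p$ and $q$ are never in unison, contradicting the convergence of the $(1,0)$-strictly-stabilizing algorithm \PROG{A} to $INV$.

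I expect the main obstacle to be the replay/indistinguishability step: making precise that a single Byzantine processor together with Lemma~\ref{lem:middleClosureRing} suffices to reproduce, on the ring, the entire sequence of local views that $p$ experiences in $\omega$, so that determinism actually forces $p$ to repeat its non-converging behavior. The subtlety specific to the ring --- and the reason the chain proof does not literally ``directly apply'' --- is that the in-unison neighbor $l$ now touches the faulty node, so its stability near $a$ must be derived from the closure lemma rather than taken for granted as it was for an end processor.
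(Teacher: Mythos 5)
Your proof is correct and takes essentially the same route as the paper, which simply closes the chain construction of Lemma~\ref{lem:middleActivatedChain} into the four-node ring $V=\{p,q,r,s\}$, $E=\{\{p,q\},\{q,r\},\{r,s\},\{s,p\}\}$ (your $l$ is the paper's $s$) and replays the same Byzantine-freeze and indistinguishability argument. Your extra step invoking Lemma~\ref{lem:middleClosureRing} to keep the in-unison neighbor pinned near $a$ despite its new adjacency to the faulty node is a detail the paper leaves implicit, and it is a sound addition rather than a departure.
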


\begin{proof}
The proof of Lemma \ref{lem:middleActivatedChain} directly applies here if we consider the following system: $V=\{p,q,r,s\}$ and $E=\{\{p,q\},\{q,r\},\{r,s\},\{s,p\}\}$.
\end{proof}

\begin{theorem}\label{trm:lowerBoundRing}
The stabilization time of any $(1,0)$-strictly-stabilizing deterministic minimal asynchronous unison on rings is in $\Omega(L)$ where $L$ is the maximum clock drift between two correct neighbors in the initial configuration.
\end{theorem}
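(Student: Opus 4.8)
The plan is to transport the chain lower bound (Theorem~\ref{trm:lowerBoundChain}) to the ring, working on the smallest ring that already isolates a single correct--correct boundary, namely the four-cycle $V=\{p,q,r,s\}$, $E=\{\{p,q\},\{q,r\},\{r,s\},\{s,p\}\}$ with $s$ Byzantine. This is exactly the system used in the proof of Lemma~\ref{lem:middleActivatedRing}, so both ring lemmas apply verbatim. Fix any $(1,0)$-strictly-stabilizing deterministic minimal \PROG{A}, pick naturals $a,t$, and start from $c_p=c_q=a+2t$, $c_r=a$, with the Byzantine $s$ simply acting as a crashed processor pinned at value $a$. The only correct--correct edges are $\{p,q\}$ (drift $0$) and $\{q,r\}$ (drift $2t$), so $L=2t$, and the initial configuration is harmless with respect to $INV$ because $p$ being out of unison with the \emph{faulty} $s$ violates nothing.

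Next I would build the slow execution by scheduling the three correct processors once per round (longer rounds are fine and only help), keeping track of $c_q$ and $c_r$. Throughout the descent, each correct processor is in unison with \emph{exactly one} neighbor: $p$ with $q$, $q$ with $p$, and $r$ with the low, crashed $s$; this holds as long as $c_q-c_r\geq 2$, which I verify inductively. Lemma~\ref{lem:middleClosureRing} then supplies the two crucial upper bounds. First, $r$ always stays within $1$ of $c_s=a$, so $c_r\leq a+1$ in every configuration. Second, a single activation of $q$ sends $c_q$ to a value $c_q'$ with $|c_q'-c_p|\leq 1$, while the old value obeyed $|c_q-c_p|\leq 1$ for the same $c_p$; hence $c_q'\geq c_p-1\geq c_q-2$, i.e.\ one activation lowers $c_q$ by at most $2$. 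With $q$ activated once per round, $c_q\geq a+2t-2i$ after $i$ rounds, so $c_q-c_r\geq (a+2t-2i)-(a+1)=2t-2i-1\geq 2$ for every $i\leq t-2$. Thus $\{q,r\}$ is not in unison, $INV$ fails, through round $t-2$, and stabilization requires at least $t-1=\Omega(L)$ rounds. Lemma~\ref{lem:middleActivatedRing} guarantees $p,q,r$ are each enabled infinitely often, so the once-per-round schedule is realizable and strongly fair, and Lemma~\ref{lem:islandClosureRing} keeps $\{p,q\}$ a single island so the bound on $c_q$ stays valid.

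The hard part, and the reason this is not a literal copy of the chain proof, is the disappearance of end processors: on a ring there is no analogue of Lemma~\ref{lem:endEnabledChain}, and hence no type-$\mathbf{1}/\mathbf{2}/\mathbf{3}$ classification, which on the chain was what pinned down the descent rate of the high-island anchor. I expect this to be the main obstacle, but the observation that dissolves it is that the lower bound needs only an \emph{upper} bound on the per-round drift decrease, and closure alone (Lemma~\ref{lem:middleClosureRing}) delivers it: the correct neighbor $p$ keeps $q$ permanently in the ``in unison with exactly one neighbor'' regime, so $q$'s descent is capped at $2$ per activation with no knowledge whatsoever of the specific rules of \PROG{A}. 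The one place where the ring genuinely constrains the construction is the rule that a Byzantine processor must present a single value to both neighbors; I accommodate it by letting $s$ remain crashed at the low value $a$, which anchors $r$ from below while appearing (correctly, and harmlessly) far out of unison to the high processor $p$. I would finish exactly as in Theorem~\ref{trm:lowerBoundChain}, remarking that the symmetric initial orderings (in particular $c_q<c_r$) are handled identically, so the stabilization time of any such \PROG{A} on rings is in $\Omega(L)$.
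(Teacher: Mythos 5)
Your proof is correct at the same level of formality as the paper's, but it follows a genuinely different construction. The paper works on a five-processor ring $p,q,r,s,t$ with the Byzantine $r$ wedged between two low singleton islands $q$ and $s$ and a two-processor high island $\{t,p\}$; its Byzantine actively ascends one unit per round so that $q$ and $s$ remain in unison with it, which lets Lemmas~\ref{lem:middleClosureRing} and~\ref{lem:middleActivatedRing} pin down the \emph{exact} value every correct processor takes each round, and the drift then decreases by exactly $2$ per round over $t=L/2$ rounds. You instead use a four-processor ring with a crashed Byzantine doing double duty --- anchoring the low processor $r$ at $c_r\leq a+1$ and serving as the far neighbor that forces the high island $\{p,q\}$ downward --- and you need only an \emph{upper} bound on the per-round descent of $c_q$, which closure (Lemmas~\ref{lem:islandClosureRing} and~\ref{lem:middleClosureRing}) supplies without ever determining what \PROG{A} actually does. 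That is a real simplification: it dispenses with any analogue of the end-processor type classification and with any cooperation from the Byzantine, at the negligible cost of a slightly weaker constant ($t-1$ rounds rather than $t$). One caveat you share with the paper's own chain and ring proofs: the round accounting presumes the adversary can close each round after a single activation of each correct processor, i.e., that $r$ is enabled once $p$ and $q$ have stepped; neither argument excludes an algorithm that keeps $r$ disabled until $c_q$ is near, which would force extra leapfrogging activations of $p$ and $q$ inside one round. Since the paper accepts that level of rigor, your proof stands on equal footing.
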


\begin{proof}
Assume that \PROG{A} is a $(1,0)$-strictly-stabilizing deterministic minimal asynchronous unison on rings.

Let $a,t$ be natural numbers. Consider the following initial configuration $s^0$: $V=\{p,q,r,s,t\}$, $E=\{\{p,q\},\{q,r\},\{r,s\},\{s,t\},\{t,p\}\}$, $r$ is Byzantine and $c_p=c_t=a+2t$, $c_q=c_s=c_r=a$ (see Figure \ref{fig:exOmega2}). Hence, we have a maximal clock drift of $L=2t$.

\begin{figure}
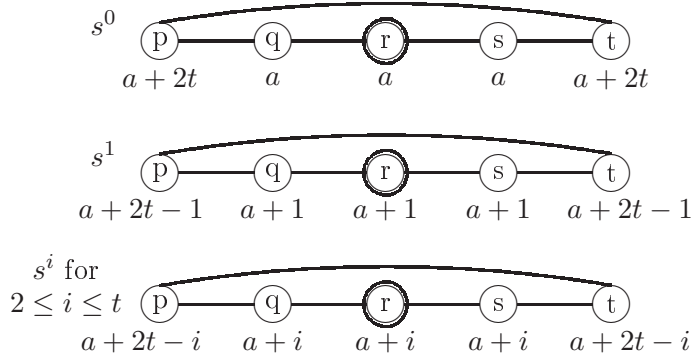

\noindent \begin{centering} \include{ExOmega2}
  \par\end{centering}\caption{\label{fig:exOmega2}Configurations used in proof of Theorem \ref{trm:lowerBoundRing}}
\end{figure}

Note that $p$ and $t$ are enabled to take the value $a+2t-1$ in $s^0$ (by Lemmas \ref{lem:middleActivatedRing} and \ref{lem:middleClosureRing} and the fact that \PROG{A} is minimal).By similar reasoning, we can deduce that $q$ and $s$ are enabled to take the value $a+1$.

Then, the following execution of \PROG{A} is possible: $p$ is activated and takes value $a+2t-1$, $t$ is activated and takes value $a+2t-1$, $q$ is activated and takes value $a+1$, $s$ is activated and takes value $a+1$ and $s$ takes the value $a+1$ (recall that $s$ is byzantine). We obtain the configuration $s^1$ depicted in Figure \ref{fig:exOmega2}.

We can observe that the first round $R_1$ of our execution ends in $s^1$ and that we have now a maximal clock drift of $a+2(t-1)$.

By the same reasoning, we can construct a sequence of $t-1$ rounds $R_{i}=s^{i-1}\ldots s^i$ ($2\leq i\leq t$) as follows:  $p$ is activated and takes value $a+2t-i$, $t$ is activated and takes value $a+2t-i$, $q$ is activated and takes value $a+i$, $s$ is activated and takes value $a+i$ and $s$ takes the value $a+i$ (recall that $s$ is byzantine). We obtain the configuration $s^i$ at the end of round $R_i$ ($2\leq i\leq t$) depicted in Figure \ref{fig:exOmega2}. At the end of round $R_i$ ($2\leq i\leq t$), we have a maximal clock drift of $2(t-i)$.

We can conclude that, at the end of the round $R_{t-1}$, the maximal clock drift is $2$ whereas, at the end of the round $R_t$, the maximal clock drift is $0$. By construction of $t$, we can conclude that \PROG{A} needs $\Omega(L)$ rounds to stabilize.
\end{proof}

\paragraph{Conclusion.} Let us review our conclusions so far. Theorem~\ref{trm:upperBound} proves
that the stabilization complexity of \PROG{SSU} is in $O(L)$ rounds while
Theorems~\ref{trm:lowerBoundChain} and \ref{trm:lowerBoundRing} show that any $(1,0)$-strict-stabilizing
algorithm requires at least that many rounds to stabilize. The
following theorem summarizes these results.

\begin{theorem}The stabilization complexity of \PROG{SSU} is optimal. 
It stabilizes in $\Theta(L)$ asynchronous rounds where L is the
largest drift between correct processors.
\end{theorem}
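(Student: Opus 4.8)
The plan is to obtain this $\Theta(L)$ bound purely by assembling the three quantitative results already proved, so no new combinatorial argument is needed. The only genuine content is the observation that \PROG{SSU} is itself a member of the class of algorithms to which the lower bounds of Theorems~\ref{trm:lowerBoundChain} and~\ref{trm:lowerBoundRing} apply; once that is checked, the matching upper and lower bounds pin down the complexity exactly. Throughout, $L$ denotes the maximum clock drift between two correct neighbors in the initial configuration, which is precisely the quantity appearing in all three ingredient theorems, so the two sides of the bound refer to the same parameter.

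First I would invoke Theorem~\ref{trm:upperBound} to conclude that \PROG{SSU} stabilizes in $O(L)$ rounds, uniformly on both chains and rings. This gives the upper half of the claimed $\Theta(L)$ immediately, with no case analysis beyond what Theorem~\ref{trm:upperBound} already covers.

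For the lower half, the key step is to verify that \PROG{SSU} satisfies the hypotheses of the lower-bound theorems, namely that it is a $(1,0)$-strictly-stabilizing, deterministic, minimal asynchronous unison algorithm running under a central strongly fair scheduler. Strict stabilization to the $(1,0)$-invariant $INV$ on chains follows from Theorems~\ref{trm:stabilizationChain} and~\ref{trm:specChain}, and on rings from Theorems~\ref{trm:stabilizationRing} and~\ref{trm:specRing}; minimality holds because each processor stores only its clock $c_p$; and determinism is evident from the guarded-command description in Figure~\ref{fig:ssu}. Since \PROG{SSU} is moreover of type~\textbf{1} (as noted after the type classification), it falls squarely under the configurations analyzed in Theorem~\ref{trm:lowerBoundChain}, and the corresponding ring construction of Theorem~\ref{trm:lowerBoundRing} applies as well. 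Hence every such algorithm, \PROG{SSU} included, requires $\Omega(L)$ rounds on chains and on rings.

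Combining the $O(L)$ upper bound with the $\Omega(L)$ lower bound yields $\Theta(L)$ asynchronous rounds for \PROG{SSU}, which is therefore asymptotically optimal among deterministic minimal $(1,0)$-strictly-stabilizing solutions. I expect the only subtle point, rather than a real obstacle, to be the bookkeeping that confirms the hypotheses of the lower-bound theorems are met by \PROG{SSU} and that the $L$ in the upper bound coincides with the $L$ attained by the extremal initial configurations constructed in the lower bounds; once these alignments are stated explicitly, the conclusion is immediate.
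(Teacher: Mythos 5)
Your proposal is correct and follows exactly the route the paper itself takes: the theorem is stated as a summary that combines the $O(L)$ upper bound of Theorem~\ref{trm:upperBound} with the $\Omega(L)$ lower bounds of Theorems~\ref{trm:lowerBoundChain} and~\ref{trm:lowerBoundRing}, the latter applying to \PROG{SSU} because it is a deterministic minimal $(1,0)$-strictly-stabilizing solution. Your explicit verification that \PROG{SSU} meets the hypotheses of the lower-bound theorems (and is of type \textbf{1}) is bookkeeping the paper leaves implicit, but it is the right check and nothing more is needed.
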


\section{Conclusion}

In this paper we explored joint tolerance to Byzantine and systemic
transient faults for the asynchronous unison problem in shared
memory. The presence of algorithms that tolerate both fault classes
poses the question for further study: what are the properties of such
algorithms in more concrete execution models of finer atomicity such
as shared registers or message-passing. Lower atomicity models tend to
empower faulty processors. Indeed, in shared register model, the
Byzantine processor on a ring may report differing clock values to its
right and left neighbors complicating fault recovery. In our future
work we would like to pursue this research question.

\bibliographystyle{plain}
\bibliography{biblio}

\end{document}